\newcommand{\ket}[1]{|#1\rangle}
\newcommand{\bra}[1]{\langle#1|}
\newcommand{\pro}[2]{|#1\rangle\langle#2|}
\newcommand{\mean}[1]{\langle#1\rangle}
\newcommand{\abs}[1]{|#1|}
\newcommand{\tr}{\mathrm{tr}}
\renewcommand{\th}{{\mathrm{th}}}
\newcommand{\norm}[1]{\left\lvert\left\lvert#1\right\rvert\right\rvert}
\newcommand{\R}{{\rho}}
\newcommand{\I}{{I}}
\newcommand{\C}{{\mathcal{C}}}
\renewcommand{\P}{{P}}
\newcommand{\HS}{\mathcal{H}}
\newcommand{\ham}{{H}}
\newcommand{\cg}{{\mathrm{cg}}}
\newcommand{\vN}{{\mathrm{vN}}}
\newcommand{\floor}[1]{{\left\lfloor #1 \right\rfloor}}
\definecolor{mygray}{gray}{0.6}
\theoremstyle{definition}
\newtheorem{theorem}{Theorem}
\newtheorem{lemma}{Lemma}
\definecolor{dfcol}{cmyk}{1, 0.2108, 0.13, 0.3}
\newcommand{\df}[1]{\ifthenelse{\boolean{}}{\textcolor{dfcol}{[{\bf DF}: #1]}}{}}
\begin{document}


\title{Work extraction from unknown quantum sources}

\author{Dominik \v{S}afr\'{a}nek}
\email{dsafranekibs@gmail.com}
\affiliation{Center for Theoretical Physics of Complex Systems, Institute for Basic Science (IBS), Daejeon - 34126, Korea}

\author{Dario Rosa}
\email{dario\_rosa@ibs.re.kr}
\affiliation{Center for Theoretical Physics of Complex Systems, Institute for Basic Science (IBS), Daejeon - 34126, Korea}
\affiliation{Basic Science Program, Korea University of Science and Technology (UST), Daejeon - 34113, Korea}

\author{Felix C. Binder}
\email{quantum@felix-binder.net}
\affiliation{School of Physics, Trinity College Dublin, Dublin 2, Ireland}

\date{\today}

\begin{abstract}
Energy extraction is a central task in thermodynamics. In quantum physics, ergotropy measures the amount of work extractable under cyclic Hamiltonian control. As its full extraction requires perfect knowledge of the initial state, however, it does not characterize the work value of unknown or untrusted quantum sources. Fully characterizing such sources would require quantum tomography, which is prohibitively costly in experiments due to the exponential growth of required measurements and operational limitations. Here, we therefore derive a new notion of ergotropy applicable when nothing is known about the quantum states produced by the source, apart from what can be learned by performing only a single type of coarse-grained measurement. We find that in this case the extracted work is defined by the Boltzmann and observational entropy, in cases where the measurement outcomes are, or are not, used in the work extraction, respectively. This notion of ergotropy represents a realistic measure of extractable work, which can be used as the relevant figure of merit to characterize a quantum battery. 
\end{abstract}

\maketitle

Efficient energy extraction is a key quest for living beings and modern technology alike. In recent years the advent of quantum technology has spurred the study of energy sources beyond the classical realm and the emerging field of quantum thermodynamics~\cite{gemmer2009quantum,kosloff2013quantum,goold2016therole,vinjanampathy2016quantum,binder2018thermodynamics,deffner2019quantum} has investigated the role of quantum features in this task. At the same time while modern quantum technology already finds applications in secure communication~\cite{minder2019experimental,Pirandola2020cryptographyreview,chen2021integrated}, sensing~\cite{accellerometer2018,menoret2018gravity,ligoquantum2019}, and computing~\cite{supremacy2019,chinasupremacy2020,ibmsupremacy2021,timecrystalmanybody2021,googletimecrystal2022} these devices need to be powered, conceivably with non-equilibrium, quantum sources of energy. An example are recently experimentally-demonstrated~\cite{quach2022superabsorption,hu2021optimal} quantum batteries~\cite{alicki2013entanglement,Campaioli2018a,bhattacharjee2021quantum, shi2022entanglement}, which offer a significant quantum advantage in charging power~\cite{Binder2015a,campaioli2017enhancing, ferraro2018high-power, rossini2020quantum, gyhm2022quantum}. Generally, if one wants to make use of energy from an energy source, the first step is to characterize it. In the quantum regime, the energetic potential of the source is given by the quantum state it produces and measured by the Hamiltonian. 
Energy can be extracted by performing operations that transform this state into a state of lower energy, and collecting the surplus in the process.

Here, we consider a quantifier of work potential in the quantum regime called ergotropy $W$ which equals the amount of energy extractable from a known quantum state $\rho$ under the application of cyclic Hamiltonian control $H(t)$ (where $H(t)=0$ for $t<0$ and $t>T$ for protocol duration $T$)~\cite{Allahverdyan2004a}. Since the resulting overall unitary evolution $U$ is reversible, no entropy or heat is produced and the energy change $\tr[H(0)(\rho-U\rho U^\dag)]$ exclusively manifests as work.

Ergotropy has been widely studied and measured in experiments~\cite{VonLindenfels2019,VanHorne2020} where it quantifies the energy deposited onto a quantum load. However, a conceptual hurdle remains: the assumption of perfect knowledge of the state from which energy is extracted. In practice, the energy source may be unknown or uncharacterized and prohibit such idealized energy extraction. To fully characterize it, one would require complete state tomography on a large number of identically-prepared states before the actual work extraction procedure, involving measurements in a number of non-commuting bases~\cite{dariano2003quantumtomography,tomographyexperimental2010,toninelli2019tomographyreview}. In many-body systems, which constitute quantum batteries, this number is enormous and thus these measurements are practically unrealistic. This is also the reason why entanglement entropy is difficult to measure, with a few exceptions in small-dimensional systems~\cite{lanyon2017efficient,sackett2000experimental}. In many-body systems, only the second order R\'enyi entropy has been measured instead~\cite{kaufman2016,brydges2019probing,islam2015measuring,su2022observation}. In many experiments only limited types of measurements can be performed~\cite{hofferberth2007non,polkovnikov2011colloquium,trotzky2012probing,schreiber2015observation,kaufman2016,bernien2017probing}.
Further, the required measurements can be prohibitively costly, may require long~\cite{strasberg2022long} or even infinite time~\cite{busch1990energy}, and may in fact be fundamentally incompatible with the laws of thermodynamics~\cite{Guryanova2020a}. Thus, we here ask how much work may be extracted when only a single type of coarse-grained measurement is available to characterize the energy source. 

We derive the corresponding quantifiers of maximally extractable work under this operational constraint, and dub them Boltzmann and observational ergotropy, because they are implicitly defined by the average Boltzmann and observational entropy~\cite{Safranek2019,safranek2019quantum,strasberg2021first,Safranek2021,buscemi2022observational}, respectively. The first applies to a situation when the measurement outcomes are employed in the work extraction process, the second when the partially characterized source is no longer measured. Finally, we illustrate the effects of the operational constraints on work extraction from an evolving quantum state.

\medskip\noindent \textbf{Work extraction.} The amount of work that may be extracted from a quantum system is contingent on what type of system manipulation is experimentally possible~\cite{Niedenzu2019,kamin2021exergy,janovitch2022quantum,morrone2023daemonic,mula2023ergotropy,koshihara2023quantum}. Here, we will be concerned with a quantum system's ergotropy, which measures the amount of energy that can be extracted by a unitary transformation. It is defined as~\cite{Allahverdyan2004a}
\begin{equation}
W(\rho,\ham)=\tr[\ham\R]-\min_U\tr[\ham U\R U^\dag].
\label{eq:ergotropy}
\end{equation}
This can be written as a closed expression:
\begin{equation}\label{eq:singleergo}
W(\rho,\ham)= \tr[\ham(\R-\pi)],
\end{equation}
where, in the case of non-degenerate energy levels $\pi$ is the unique passive state for the tuple $(\R,\ham)$, meaning that it has the same eigenvalues as $\R$, multiplying energy eigenvectors in decreasing order (in the degenerate case, $\pi$ is a member of a family of passive states minimizing Eq.~\eqref{eq:ergotropy}). Interestingly, it is more efficient to extract the energy simultaneously from $N$ copies, in which case we obtain an asymptotic expression~\cite{alicki2013entanglement,hovhannisyan2013entanglement,campaioli2018quantum},
\begin{equation}\label{eq:multiergo}
W^\infty(\R):=\lim_{N\rightarrow \infty}\frac{W(\R^{\otimes N},\ham_N)}{N}=\tr[\ham(\R-\R_{\beta})].
\end{equation}
Here, $\R_\beta=e^{-\beta\ham}/Z$ is the thermal state with inverse temperature $\beta$ defined by the von Neumann entropy, $S_{\vN}(\R_{\beta})=S_{\vN}(\R)$, and $\ham_N:=\sum_{i=1}^N h_i$ with $h_i\cong H$ for all $i$ (That is, $H_N$ is a sum of isomorphic local terms; we notationally omit trivial terms on all other Hilbert spaces). $\rho_{\beta}$ always lower-bounds $\pi$ energetically: $\tr[H(\pi-\rho_{\beta})]\geq 0$. For the remainder of the paper we simplify the notation to $W(\rho)\equiv W(\rho,\ham)$. See Appendix A for a generalization of this result.

\begin{figure}[t]
\begin{center}
\includegraphics[width=1\hsize]{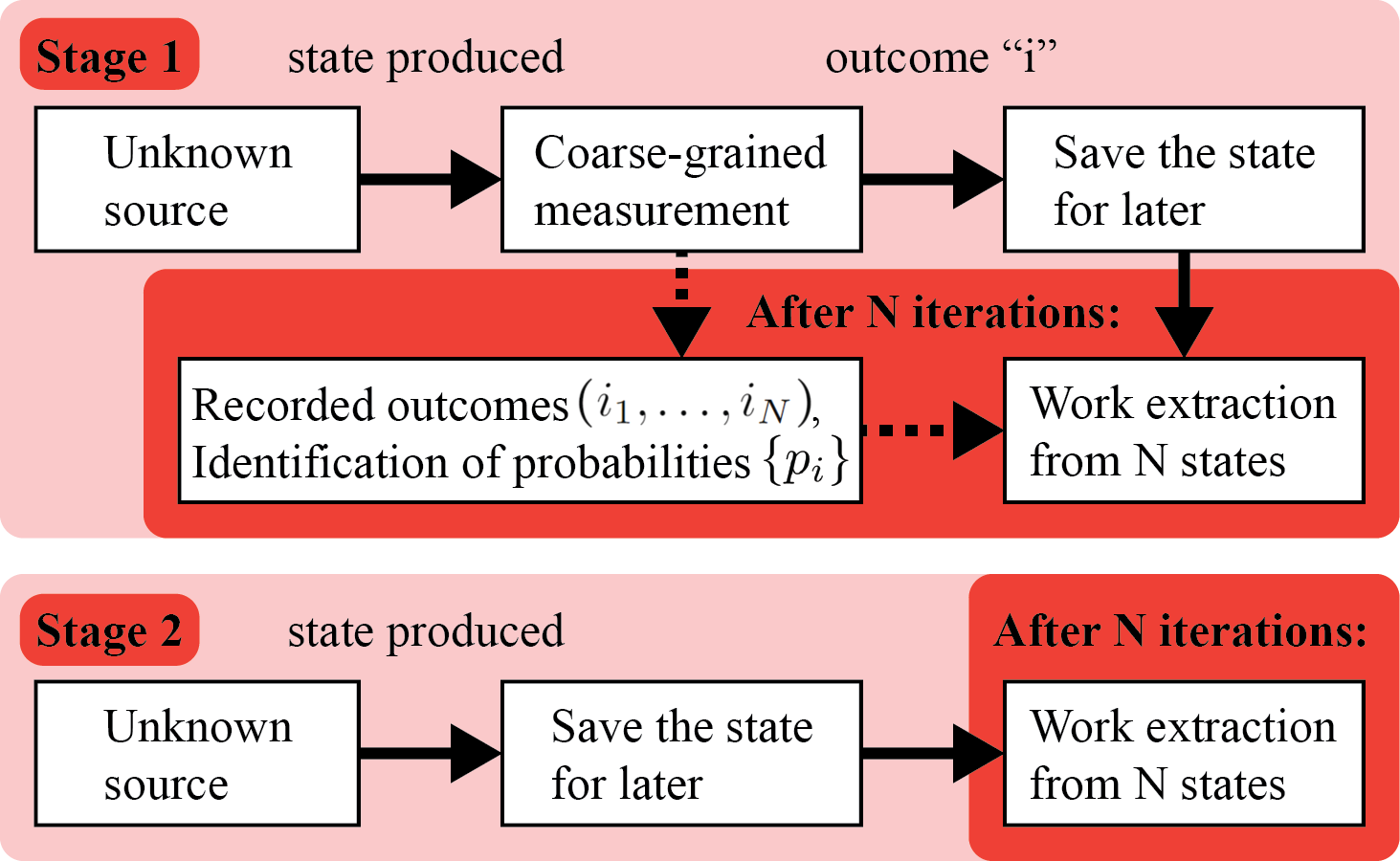}\\
\caption
{Scheme of work extraction from unknown sources.
In the limit of the large number of iterations $N$, the ergotropy per state is larger in Stage 1 at the cost of having to store the measurement outcomes. It is given by $W=\tr[\ham(\R-\R_{\beta})]$, where $\R_{\beta}$ is a thermal state with temperature $\beta$ implicitly defined by the mean Boltzmann entropy through $S_{\vN}(\R_{\beta})=S^B_\C$ in Stage 1, and by observational entropy through $S_{\vN}(\R_{\beta})=S_\C$ in Stage 2, respectively. See Eqs.~\eqref{eq:B_ergotropy_infty} and~\eqref{eq:o_ergotropy_infty}. The energy $\tr[\ham\R]$ is unknown, but it can be estimated from Eq.~\eqref{eq:bound_for_local_energy_cg}.}
\label{Fig:scheme}
\end{center}
\end{figure}

\medskip\noindent \textbf{Extraction scheme.} 
Consider a scenario where only a single type of measurement is available to characterize the source of quantum states, potentially comprising significantly fewer outcomes than the system's microscopic degrees of freedom. In particular, consider a set of orthogonal projectors $\C=\{\P_i\}$ with $\sum_i\P_i=\I$. The probability of outcome $i$ when measured on an unknown state $\R$ is $p_i=\tr[\P_i\R]$. After obtaining this outcome, the corresponding state is projected onto $\R_i=\P_i\R\P_i/p_i$. 
The measurement (=coarse-graining) also naturally defines the decomposition of the Hilbert space into subspaces-macrostates, $\HS=
\bigoplus_i\HS_i$, with each macrostate given by
\begin{equation}
\HS_i=\P_i\HS\P_i=\mathrm{span}_{\ket{\psi}}\{\P_i\ket{\psi}\}.
\end{equation}

We will study two work-extraction stages, depicted in Fig.~\ref{Fig:scheme}:

{\bf Stage 1} combines characterization the source with work extraction. We characterize the source by measuring a sufficiently large number $N$ of states. We record the outcomes and identify the corresponding probabilities $p_i$. Then, we extract work from these $N$ states, by a protocol that makes use of the records.

{\bf Stage 2} uses the already characterized source to extract energy from the quantum states it produces, without further measurements. The probabilities $p_i$ thus given, we extract energy from any number of states produced by the source.

\medskip\noindent \textbf{Partially random unitary extraction operations.} 
The original notion of ergotropy assumes perfect knowledge of the density matrix from which energy is extracted. This is necessary in order to find the best extraction unitary for that particular state. Here, in contrast, the state produced by the source is unknown; only the outcomes $i$, or probabilities $p_i$ are obtainable by measurement. Thus, given this incomplete information about the initial state, we are unable to find the unitary that extracts the energy perfectly. In order to address this, we design a protocol that makes use only of this incomplete information to extract energy. This protocol will be partially random, so also the extracted work will be random. However, we will be able to determine the average amount of extracted work from many copies of the same state and find the cases in which it is positive.  


The work extraction protocol consists of two unitary operations: First, a random unitary $\bigoplus_i \tilde U_i$ that randomizes states in each macrostate-subspace is applied. This is necessary to make the task tractable, by making the average state effectively known. 
Then a non-random global extraction unitary $U$, which makes use of the partial information, is applied to extract the remaining available energy. Thus, the total extraction operation, which is partially random, is
\begin{equation}\label{eq:cg_extraction_U}
U\bigoplus_i \tilde U_i.
\end{equation}
Unitary $U$ acts on the entire Hilbert space, and it is later optimized to take into account the knowledge of either outcomes $i$ or probabilities $p_i$ obtained from the measurements. $\tilde{U}_i$ are random unitaries, each acting on the corresponding macrostate-subspace $\HS_i$. We choose operations $\tilde{U}_i$ to be completely random, according to the Haar measure. This ensures that averaging over many 
realizations of the protocol leads to the following mathematical formula, defining a non-unitary operation
\begin{equation}\label{eq:global_extraction_operation}
    \mathcal{U}(\R)=U\Big( \int (\bigoplus_i\tilde{U}_i)\R (\bigoplus_i\tilde{U}_i^\dag) d\mu(\bigoplus_i\tilde{U}_i)\Big) U^\dag=U\R_\cg U^\dag.
\end{equation}
(See Supplemental Material for the proof and a simple analytical example.) Here, the coarse-grained state 
\begin{equation}
\R_\cg=\sum_i\frac{p_i}{V_i}\P_i
\end{equation}
is known, because both $p_i$ and $V_i$ are experimentally available, unlike the full original state $\R$. $V_i=\tr[\P_i]=\dim \HS_i$ is the volume of the macrostate (the number of its constituent distinct microstates), which depends solely on the measurement. The normalized ($\int d\mu(\bigoplus_i\tilde{U}_i)=1$) Haar measure factorizes into subspace unitary Haar measures as $d\mu(\bigoplus_i\tilde{U}_i)=d\mu(\tilde U_1)d\mu(\tilde U_2)\cdots$. 
We will use the formula~\eqref{eq:global_extraction_operation} when computing the average extracted work. 

Note that random unitaries are key to a number of theoretical protocols~\cite{Ohliger_2013,nahum2017quantum,Russell_2017,nahum2018operator,huang2020predicting,elben2020many,rossini2020measurement,rath2021quantum}, some of which were implemented in experiments~\cite{brydges2019probing,yu2021experimental}, and various methods to generate them have been developed~\cite{lundberg2004haar,mezzadri2007how}.


In the case of  simultaneous extraction from multiple copies, we choose the random unitaries to act on each individual copy, so that the global extraction operation amounts to
\begin{equation}\label{eq:global_extraction}
U(\bigoplus_i \tilde{U}_i)^{\otimes N}.
\end{equation}
$U$ is a global unitary acting on an $N$-partite state. After averaging, we obtain
\begin{equation}
\label{eq:global_extraction_operation_multiple}
\mathcal{U}(\R^{(1)}\otimes\cdots \otimes\R^{(N)})=U\R_\cg^{(1)}\otimes\cdots \otimes\R_\cg^{(N)}U^\dag.
\end{equation}

See the extraction protocol applied to Stages 1 and 2 in Fig.~\ref{fig:protocol}.

\medskip\noindent \textbf{Extracted work in Stage 1: with measurement.}
The total extracted work is obtained as the difference between the initial and the average final energy of the state. Its derivation is presented in Appendix B while here we present the main results.

In the case of extraction from a single copy of the initial state ($N=1$ in Figs.~\ref{Fig:scheme} and~\ref{fig:protocol}), the extracted work is measured by the \emph{Boltzmann ergotropy},
\begin{equation}\label{eq:B_ergotropy}
W_\C^{B}(\R)
=\mathrm{tr}\Big[\ham\Big(\R-\sum_ip_i\pi_i\Big)\Big],
\end{equation}
where $\pi_i$ is a passive state for the tuple $(\P_i/V_i,\ham)$. It describes the maximal amount of work extractable from a state produced by an unknown source, when measuring the state and using the outcome for the extraction protocol~\footnote{Note that Eq.~\eqref{eq:B_ergotropy} does not include the (Landauer) work that would be required to reset the measurement record~\cite{Landauer1961}.}. This maximal work is averaged over many realizations of the initial state. In particular, the global unitary $U$ in the extraction operation, Eq.~\eqref{eq:cg_extraction_U}, depends on the measurement outcome $i$ and thus is optimized for, and both the measurement outcomes and the unitaries $\tilde{U}_i$ are random --- these are averaged over.

In the case of simultaneous extraction from $N$ copies of the initial state, we derive the \emph{Boltzmann ergotropy in the large-$N$ limit},
\begin{equation}\label{eq:B_ergotropy_infty}
W_\C^{B\infty}(\R)=\tr[\ham(\R-\R_{\beta})].
\end{equation}
Temperature of the thermal state $\R_{\beta}=e^{-{\beta}\ham}/Z$ is implicitly defined by requiring that its von Neumann entropy equals the mean \emph{Boltzmann entropy} with coarse-graining $\C$,
\begin{equation}\label{eq:B_entropy_ergotropy}
\begin{split}
S_{\vN}(\R_{\beta})=S_\C^B.
\end{split}
\end{equation}
The mean Boltzmann entropy is defined as $S_\C^B=\sum_i p_i\ln V_i$, in which both $p_i$ and $V_i$ are experimentally accessible. 
Eq.~\eqref{eq:B_ergotropy_infty} defines the amount of extractable work \emph{per copy} in the large-$N$ limit when simultaneously extracting from $N$ copies of the initial state while using the measurement outcomes in the process. In particular, $U$ in Eq.~\eqref{eq:global_extraction} depends on the list of outcomes $(i_1,\dots, i_N)$ and random unitaries $\tilde{U}_i$ are averaged over. We have $W_\C^{B\infty}\geq W_\C^{B}$.

The extractable work depends largely on the amount of coarse-graining. For a fine-grained measurement projecting onto a pure state, we have $V_i=1$ and thus $S_\C^B=0$. This means that all the mean energy can be extracted. On the other hand, for a very coarse measurement in which $S_\C^B$ is large, very little work can be obtained. In fact, energy can even be lost, if $\tr[\ham\R_{\beta}]>\tr[\ham \R]$.

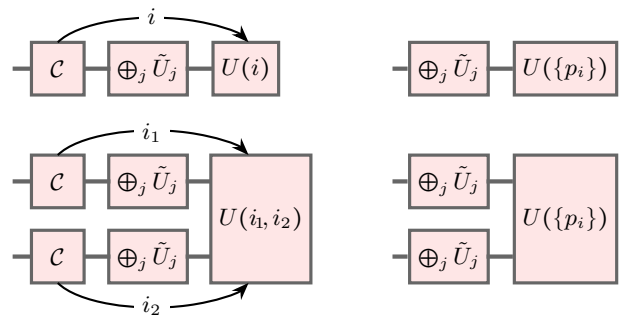
\begin{figure}
    \centering
    \begin{tikzpicture}
[roundnode/.style={circle, draw=red!60, fill=red!10, very thick, minimum size=7mm},
squarednode/.style={rectangle, draw=black!60, fill={rgb,255:red,254;green,230;blue,230}, very thick, minimum size=7mm},squarednode2/.style={rectangle, draw=black!60, fill={rgb,255:red,254;green,230;blue,230}, very thick, minimum height=17mm}, delta angle=30, >=Stealth
]
\node at ( 0,0) [squarednode] {$\C$};
\node at ( 1.2,0) [squarednode] {$\bigoplus_j \tilde U_j$};
\node at ( 2.5,0) [squarednode] {$U(i)$};
\draw[-][black!60, ultra thick] (-0.6,0) -- (-0.35,0);
\draw[-][black!60, ultra thick] (0.35,0) -- (0.7,0);
\draw[-][black!60, ultra thick] (1.7,0) -- (2.05,0);
arc [start angle=30, delta angle=30] -- cycle;
\draw[thick,->] (0,.35) arc [start angle=160, end angle=
20, x radius=13.5mm, y radius=5mm];
\newcommand{\ydist}{-1.5}
\node at ( 0,\ydist) [squarednode] {$\C$};
\node at ( 1.2,\ydist) [squarednode] {$\bigoplus_j \tilde U_j$};
\node at ( 0,\ydist-1) [squarednode] {$\C$};
\node at ( 1.2,\ydist-1) [squarednode] {$\bigoplus_j \tilde U_j$};
\node at ( 2.7,\ydist-.5) [squarednode2] {$U(i_{\!1}\!,i_{2})$
};
\draw[-][black!60, ultra thick] (-0.6,\ydist) -- (-0.35,\ydist);
\draw[-][black!60, ultra thick] (0.35,\ydist) -- (0.7,\ydist);
\draw[-][black!60, ultra thick] (1.7,\ydist) -- (2.05,\ydist);
\draw[-][black!60, ultra thick] (-0.6,\ydist-1) -- (-0.35,\ydist-1);
\draw[-][black!60, ultra thick] (0.35,\ydist-1) -- (0.7,\ydist-1);
\draw[-][black!60, ultra thick] (1.7,\ydist-1) -- (2.05,\ydist-1);
arc [start angle=30, delta angle=30] -- cycle;
\draw[thick,->] (0,\ydist+.35) arc [start angle=160, end angle=
20, x radius=13.5mm, y radius=5mm];
\draw[thick,->] (0,\ydist-1.35) arc [start angle=160, end angle=
20, x radius=13.5mm, y radius=-5mm];
\newcommand{\xdist}{4}
\node at ( 1.2+\xdist,0) [squarednode] {$\bigoplus_j \tilde U_j$};
\node at ( 2.75+\xdist,0) [squarednode] {$U(\{p_i\})$};
\draw[-][black!60, ultra thick] (0.45+\xdist,0) -- (0.65+\xdist,0);
\draw[-][black!60, ultra thick] (1.7+\xdist,0) -- (2.05+\xdist,0);
\node at ( 1.2+\xdist,\ydist) [squarednode] {$\bigoplus_j \tilde U_j$};
\node at ( 1.2+\xdist,\ydist-1) [squarednode] {$\bigoplus_j \tilde U_j$};
\node at ( 2.75+\xdist,\ydist-.5) [squarednode2] {$U(\{p_i\})$
};
\draw[-][black!60, ultra thick] (0.45+\xdist,\ydist) -- (0.65+\xdist,\ydist);
\draw[-][black!60, ultra thick] (1.7+\xdist,\ydist) -- (2.05+\xdist,\ydist);
\draw[-][black!60, ultra thick] (0.45+\xdist,\ydist-1) -- (0.65+\xdist,\ydist-1);
\draw[-][black!60, ultra thick] (1.7+\xdist,\ydist-1) -- (2.05+\xdist,\ydist-1);
arc [start angle=30, delta angle=30] -- cycle;
\node[fill=white] at (1.25,0.7) {$i$};
\node[fill=white] at (1.25,-.85) {$i_1$};
\node[fill=white] at (1.25,-3.15) {$i_2$};
\end{tikzpicture}
    \caption{The work extraction protocol in Stage 1 (left) and Stage 2 (right), for $N=1$ (single state; top) and $N=2$ (two states; bottom). $\C$ denotes the measurement.}
    \label{fig:protocol}
\end{figure}

\medskip\noindent \textbf{Extracted work in Stage 2: no measurement.
} 
Assuming that the source has been characterized, with probabilities $p_i$ known, how much energy can be extracted without further measurement? The derivation for the following results can be found in Appendix C.

In analogy to Eq.~\eqref{eq:B_ergotropy}, the extracted work from a single state is measured by the \emph{observational ergotropy},
\begin{equation}\label{eq:O_ergotropy_i}
W_\C(\R)=\tr[\ham(\R-\pi_\cg)],
\end{equation}
where $\pi_\cg$ is a passive state for the tuple $(\R_\cg,\ham)$. 
It describes the maximally-extractable work from a state produced by an unknown source characterized by a set of probabilities $\{p_i\}$, without further measurements.
In particular, the extraction operation, Eq.~\eqref{eq:cg_extraction_U}, is applied directly on the state which is not measured beforehand. The global unitary $U$ depends on the probabilities $\{p_i\}$ and the random unitaries $\tilde{U}_i$ are averaged over.

In the case of simultaneous extraction from $N$ copies of the initial state, we obtain the \emph{observational ergotropy in the large-$N$ limit},
\begin{equation}\label{eq:o_ergotropy_infty}
W_\C^{\infty}(\R)=\tr[\ham(\R-\R_{\beta'})].
\end{equation}
Temperature $\beta'$ of the thermal state is implicitly defined by requiring that its von Neumann entropy equals the observational entropy,  
\begin{equation}\label{eq:O_ergotropy}
\begin{split}
S_{\vN}(\R_{\beta'})=S_\C.
\end{split}
\end{equation}
Observational entropy~\cite{Safranek2021} is the sum of Shannon and mean Boltzmann entropy, $S_\C=S_\C^{S\!\!\;h}+S_\C^B=-\sum_i p_i \ln p_i+\sum_i p_i\ln V_i$. Eq.~\eqref{eq:o_ergotropy_infty} measures the average amount of extractable work per copy when extracting simultaneously from a large number of copies of the state, produced by the source characterized solely by the probabilities $\{p_i\}$. The global unitary $U$ in Eq.~\eqref{eq:global_extraction} depends on the  probabilities and $\tilde U_i$ are averaged over. We have $W_\C^{\infty}\geq W_\C$.

Due to the conditional extraction, Stage~1 leads to a larger extractable work than Stage~2, $W_\C^{B}\geq W_\C$ and $W_\C^{B\infty}\geq W_\C^{\infty}$. The difference in Eqs.~\eqref{eq:B_entropy_ergotropy} and~\eqref{eq:O_ergotropy} is given by the (Shannon) entropy of measurement. An example of observational ergotropy and the corresponding entropy is depicted in Fig.~\ref{Fig:ergotropy}.

\medskip\noindent \textbf{Bounds on observational ergotropy.} 
Computing the Boltzmann or observational ergotropy requires knowledge of the mean initial energy, which is unknown but possible to estimate from the partial knowledge given by distribution $\{p_i\}$.

Consider local energy coarse-grainings,
\begin{equation}\label{eq:local_energy_cg}
\C=\{\P_{E_1}\otimes \P_{E_2}\},
\end{equation}
studied in observational entropy literature~\cite{safranek2019quantum,Safranek2021}.  $\P_{E_{1}}=\sum_{E_1'\in [E_1,E_1+\Delta E)}\pro{E_1'}{E_1'}$ and $\P_{E_{2}}$ (analogous) are coarse-grained projectors on local energies with resolution $\Delta E$. The mean energy can be estimated from
\begin{equation}\label{eq:bound_for_local_energy_cg}
\abs{\tr[\ham \R]-\tr[\ham \R_\cg]}\leq 2\norm{H_{\mathrm{int}}}+2\Delta E,
\end{equation}
where the full Hamiltonian is $\ham=\ham_1+\ham_2+\ham_{\mathrm{int}}$, and $\norm{~}$ denotes the operator norm. 
When increasing the number of partitions to $k$, we obtain $\abs{\tr[\ham \R]-\tr[\ham \R_\cg]}\leq 2\norm{\ham_{\mathrm{int}}}+k\Delta E$. The first term $\norm{\ham_{\mathrm{int}}}$ also scales linearly with $k$, representing a finite-size effect. See Supplemental Material for details.

Energy can also be estimated in the case of completely general coarse-graining~\cite{safranek2023expectation}.

\begin{figure}[t]
\begin{center}
\includegraphics[width=1\hsize]{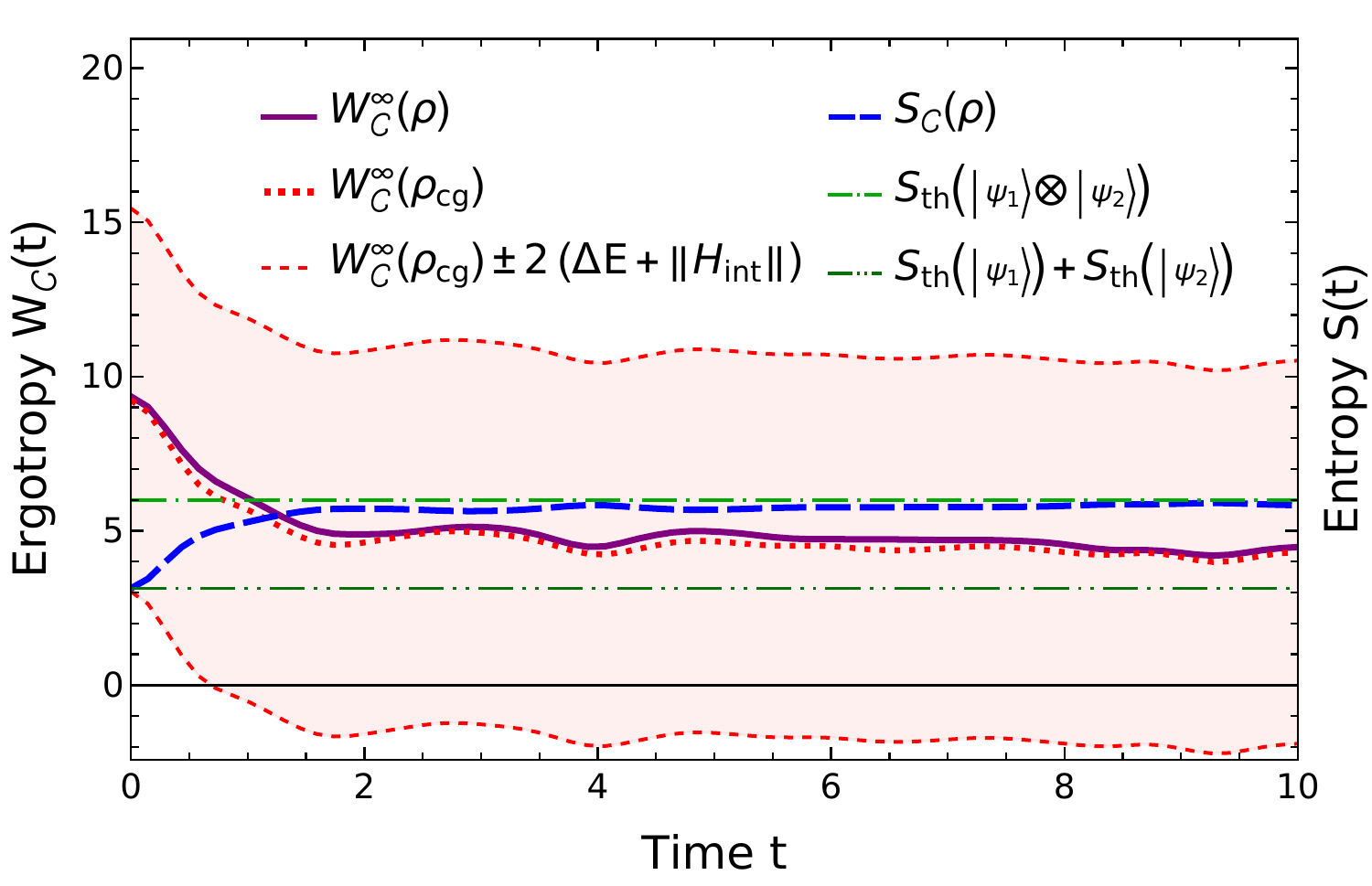}\\
\caption
{Observational ergotropy and observational entropy as a function of time, for a thermalizing system of four particles. Here, $\C$ is chosen to be a local energy coarse-graining (Eq.~\eqref{eq:local_energy_cg}); the initial state $\ket{\psi_1}\otimes \ket{\psi_2}$, $\ket{\psi_1}=\ket{000000}$ and $\ket{\psi_2}=\ket{111100}$, evolves with the Hamiltonian given by Eq.~\eqref{FermionHam}, with $T=V=1$ and $T'=V'=0.96$. We choose the energy resolution $\Delta E=(E_1-E_0)/2$, where $E_0$ and $E_1$ are the ground and first excited state energy, respectively. True ergotropy $W_\C^\infty(\R)$ (solid purple) is unknown to the experimenter. However, they can estimate the value of $W_\C^\infty(\R_{\cg})$ (red dotted), $\R_\cg=\sum_{E_1,E_2}\frac{p_{E_1E_2}(t)}{V_{E_1}V_{E_2}}\P_{E_1}\otimes\P_{E_2}$, and be sure that the true value lies within the light-red shaded region, between the red-dashed lines representing $W_\C^\infty(\R_{\cg})\pm 2(\norm{H_{\mathrm{int}}}+\Delta E)$. We compare this with observational entropy $S_\C(\R)$ (blue long-dashed) called non-equilibrium thermodynamic entropy for this particular coarse-graining~\cite{Safranek2021}. It is bounded by the sum of initial thermodynamic entropies (dark green dot-dot-dashed) and by the final thermodynamic entropy (green dot-dashed), $S_{\th}\equiv S_{C_E}$, defined as observational entropy with global energy coarse-graining with the same resolution $\Delta E$~\cite{safranek2019quantum,Safranek2021}.
Observational ergotropy is directly defined by observational entropy, and inversely related to it, as per Eqs.~\eqref{eq:o_ergotropy_infty} and~\eqref{eq:O_ergotropy}.
}
\label{Fig:ergotropy}
\end{center}
\end{figure}

\medskip\noindent \textbf{Example: ergotropy of local energy coarse-graining.} 
We illustrate the essence of our result by examining observational ergotropy as a function of time for a thermalizing system. We assume that we can measure local energies, described by coarse-graining~\eqref{eq:local_energy_cg}. We consider a widely-used and generic  one-dimensional fermionic Hamiltonian~\cite{santos2010onset}, with the nearest neighbor and next-nearest neighbor interaction, describing interacting particles hopping between $k$'th and $l$'th site as
\begin{equation}\label{FermionHam}
\ham^{(k:l)} \!=\! \sum_{i=k}^{l} \!- T f_i^{\dagger} f_{i+1} - T' f_i^{\dagger}f_{i+2}+ h.c.  +V n_i   n_{i+1}
 +V' n_{i}  n_{i+2} .
\end{equation}
(Terms with $f_{l+1}$, $f_{l+2}$, $f_{k-1}$, and $f_{k-2}$  are not included in the sum.) $f_i$ and $f_i^{\dagger}$ are the fermionic annihilation and creation operators for site $i$.  $n_i= f_i^{\dagger} f_i$ is the local density operator. We take the full Hamiltonian $\ham\equiv \ham^{(1:L)}$, where $L$ is the length of the chain, with a division between two equal sized subsystems $\ham_1\equiv \ham^{(1:L/2)}$, $\ham_2\equiv \ham^{(L/2+1:L)}$, and $H_{\mathrm{int}}=H-H_1-H_2$. We also employ hard wall boundary conditions.

In Fig.~\ref{Fig:ergotropy} we plot observational ergotropy $W_\C^\infty(\ket{\psi_t})$ for extracting energy from state $\ket{\psi_t}=\exp(-i\ham t)\ket{\psi_0}$, together with its estimates. We compare this with observational entropy to which it is reciprocally related through Eq.~\eqref{eq:o_ergotropy_infty}. 
There is a point where the lower bound on ergotropy crosses zero, in which the experimenter, given the available information, will stop characterizing the state as useful, i.e., they cannot be certain that it provides energy. As the state evolves, the system thermalizes, and the opportunity to extract work diminishes.

As the full Hamiltonian preserves the total number of particles $n$, the relevant Hilbert space explored by the system during time evolution is $\HS=\bigoplus_{k=0}^n \HS_{k}\otimes \HS_{n-k}$. This is important for correctly computing the accessible macrostate volumes $V_{E_1E_2}$. To describe situations with an unknown particle number, one has to employ additional coarse-graining in local particle numbers~\cite{Safranek2021}. 

\textbf{Discussion and Conclusions.} 
To relate ergotropy to realistic scenarios of work extraction, previous works have considered constraints such as a restriction to local~\cite{alicki2013entanglement,Hovhannisyan2013,Binder2015a,campaioli2017enhancing,alimuddin2019bound,puliyil2022thermodynamic}, Gaussian~\cite{Brown2016}, or incoherent operations~\cite{Francica2020}. Here, we have addressed the remaining open problem which is the assumption of perfect knowledge of the initial state.

Assuming that a  source of unknown states can be characterized only by a single type of coarse-grained measurement, we designed the extraction protocol as follows. A random unitary is applied on each measurement subspace. 
Then we apply a global unitary operation that optimizes the energy extraction by taking the knowledge obtained from the measurement into account.
Because of the randomness, also the work extracted is random. However, because the unitaries were picked with the Haar measure, the extracted work average is computable. 
This allows to determine the work output of the source.

The protocol results in two notions of ergotropy: Boltzmann ergotropy, which measures the extracted energy when the measurement result is conditionally taken into account (Eq.~\ref{eq:B_ergotropy_infty}), and observational ergotropy for unconditional extraction (Eq.~\ref{eq:o_ergotropy_infty}). The energy difference between the two cases results from the difference in entropy of corresponding thermal states (Eq.~\ref{eq:B_entropy_ergotropy} and Eq.~\ref{eq:O_ergotropy}). The two deviate exactly by the (Shannon) entropy of measurement $S^{S\!\!\;h}=-\sum_ip_i\ln p_i$ which lower-bounds the work required for erasing the measurement record~\cite{Sagawa2009,Reeb2014,Parrondo2015}: $W_\mathrm{erasure}\geq \beta^{-1}S^{S\!\!\;h}$ with $\beta$ the inverse temperature of the heat bath used during erasure. Note here that we have \textit{not} attempted to quantify the energy associated with the measurement itself (which indeed diverges for perfect projective measurement~\cite{Guryanova2020a}).

This work applies in two cases: first, to high-dimensional external sources, i.e., not prepared by an experimenter, on which full quantum tomography is not viable. The second case is that of imperfectly controlled systems, such as quantum batteries. Given perfect control over the charging procedure, one knows the state of the charged battery exactly. Therefore ergotropy is the relevant figure of merit \cite{andolina2019extractable, barra2019dissipative, hovhannisyan2020charging, delmonte2021characterization}. However, a certain lack of control is inevitable -- e.g. in the form of unknown disorder~\cite{rossini2019many-body, ghosh2020enhancement, rossini2020quantum, rosa2020ultra-stable,caravelli2020random, zhao2021quantum, kim2022operator, arjmandi2022enhancing}, uncertain time of charging~\cite{mitchison2021charging, yao2022optimal}, or batteries charged from an initially unknown state~\cite{Landi2021}. In all these cases, the final battery state is not fully determined. Observational ergotropy then gives an experimentally verifiable lower bound on the amount of energy that can be extracted. As such, it is a realistic figure of merit for characterizing quantum batteries.

One can also find applications from the theoretical perspective. These appear whenever there is a limit on which measurement the experimenter can perform, in the spirit of E.T.~Jaynes' statement regarding the Gibbs (mixing) paradox~\cite{jaynes1992gibbs}: ``The amount of useful work that we can extract from any system depends - obviously and necessarily - on how much ``subjective'' information we have about its microstate because that tells us which interactions will extract energy and which will not.'' Here as well, depending on the measurement, the outcome will determine the best extraction unitary and in turn the amount of extractable work.

\medskip\noindent \textbf{Acknowledgments.} DR and D\v{S} acknowledge the support from the Institute for Basic Science in Korea (IBS-R024-D1). FCB acknowledges support from grant number FQXi-RFP-IPW-1910 from the Foundational Questions Institute and Fetzer Franklin Fund, a donor-advised fund of the Silicon Valley Community Foundation. D\v{S} thanks Anthony Aguirre, Joshua M. Deutsch, Joseph Schindler, and Susanne Still for discussing theory years in the making. We acknowledge Philipp Strasberg for the excellent feedback on the first version of this manuscript and for discussing with us the examples mentioned in the Supplemental Material.


\medskip\noindent\textbf{Appendix A: simultaneous extraction generalized to a larger class of states.}
The essence of Eq.~\eqref{eq:multiergo} may equally be expressed as
\begin{equation}
\label{eq:largeN}
\lim_{N\rightarrow \infty}\min_U\frac{\tr[\ham_N U\R^{\otimes N}U^\dag]}{N}=\tr[\ham \R_{\beta}],
\end{equation}
where $S_{\vN}(\R_{\beta})=S_{\vN}(\R)$.

In the Supplemental Material, we derive its generalization,
\begin{equation}
\label{eq:largeN2}
\lim_{N\rightarrow \infty}\min_U\frac{\tr[\ham_N U\bigotimes_{i}\R_i^{\otimes p_iN}U^\dag]}{N}=\tr[\ham \R_{\beta}],
\end{equation}
where $S_{\vN}(\R_{\beta})=\sum_i p_i S_{\vN}(\R_i)$, assuming that the limit exists. Here and in the following, $\R_i^{\otimes p_iN}$ is shorthand for $\R_i^{\otimes \lfloor p_iN\rfloor}$ and $\rho_i^{\otimes 0}=1$ inside the limit. See Supplemental Material for details. $\{p_i\}$ is a set of probabilities, $\sum_ip_i=1$, and $\{\R_i\}$ is any set of density matrices.  The original formula is recovered for $p_1=1$.

\medskip\noindent \textbf{Appendix B: derivation of extracted work in Stage 1: with measurement.} See the extraction protocol in Fig.~\ref{fig:protocol} (left). 
The total extracted work is given by the difference between the initial and the final energy of the state, which for measurement outcome $i$ is
\begin{equation}\label{eq:energetics}
W_{\mathrm{single}}^{i}=\tr[\ham\R]-\tr[\ham U\big(\bigoplus_j\tilde{U}_j\big)\R_i\big(\bigoplus_j\tilde{U}_j^\dag\big)U^\dag],
\end{equation}
where $\R_i=\P_i\R\P_i/p_i$. 
In cases where the outcome is $i$, the experimenter on average extracts
\begin{equation}\label{eq:wi_on_average}
W^i=\int W_{\mathrm{single}}^{i}\, d\mu(\bigoplus_j\tilde{U}_j)=\tr[\ham\R]-\tr[\ham U(\P_i/V_i)U^\dag].
\end{equation}
Here we used $\mathcal{U}(\R_i)=U\P_i/V_iU^\dag$.
Maximizing over the global unitary and using Eq.~\eqref{eq:singleergo}, we define the \emph{Boltzmann ergotropy corresponding to outcome $i$},
\begin{equation}\label{eq:Bi_ergotropy}
W_\C^{B,i}(\R)=\max_{U}\tr[\ham(\R-U(\P_i/V_i)U^\dag)]=\tr[\ham(\R-\pi_i)].
\end{equation}
$\pi_i$ is a passive state for the tuple $(\P_i/V_i,\ham)$.
Averaging over all possible outcomes defines the \emph{Boltzmann ergotropy},
\begin{equation}\label{eq:B_ergotropy2}
W_\C^{B}(\R)
=\mathrm{tr}\Big[\ham\Big(\R-\sum_ip_i\pi_i\Big)\Big].
\end{equation}

In the case of simultaneous extraction in the limit of large $N$, the state after a series of measurements is (\emph{almost surely}) described by
\begin{equation}\label{eq:after_measurement_state}
\R_N=\bigotimes_i\R_i^{\otimes p_i N}
\end{equation}
This is up to reordering of the outcomes: while the order affects the specific extraction unitary, all such permutations lead to the same ergotropy (since these states are energetically equivalent). This is derived by using the law of large numbers; see Supplemental Material, which includes related Ref.~\cite{cover1999elements}.

Using the same logic as in Eq.~\eqref{eq:wi_on_average} when applying Eq.~\eqref{eq:global_extraction_operation_multiple} on Eq.~\eqref{eq:after_measurement_state}, and then using Eq.~\eqref{eq:largeN2} we derive the Boltzmann ergotropy in the large-$N$ limit,
\begin{equation}\label{eq:B_ergotropy_infty2}
W_\C^{B\infty}(\R):=\lim_{N\rightarrow \infty}\frac{W_\C^B(\R^{\otimes N})}{N}=\tr[\ham(\R-\R_{\beta})].
\end{equation}
Temperature of the thermal state $\R_{\beta}=e^{-{\beta}\ham}/Z$ is implicitly defined by requiring that its von Neumann entropy equals the mean \emph{Boltzmann entropy} with coarse-graining~$\C$,
\begin{equation}\label{eq:B_entropy_ergotropy2}
\begin{split}
S_{\vN}(\R_{\beta})=\sum_i p_i \ln V_i=:S_\C^B.
\end{split}
\end{equation}
See Supplemental Material for details.

\medskip\noindent \textbf{Appendix C: derivation of extracted work in Stage 2: no measurement.
} See the extraction protocol in Fig.~\ref{fig:protocol} (right). As in Eq.~\eqref{eq:energetics}, we derive this energy as the energy difference between the initial and the final state,
\begin{equation}
W^\mathrm{no-meas.}_{\mathrm{single}}=\tr[\ham\R]-\tr[\ham U\big(\bigoplus_i\tilde{U}_i\big)\R\big(\bigoplus_i\tilde{U}_i^\dag\big)U^\dag].
\end{equation}
Averaging over the random unitaries using Eq.~\eqref{eq:global_extraction_operation}, and maximizing over the extraction unitary defines \emph{observational ergotropy},
\begin{equation}\label{eq:O_ergotropy_i2}
W_\C(\R)=\tr[\ham\R]-\min_U\tr[\ham U\R_\cg U^\dag]=\tr[\ham(\R-\pi_\cg)].
\end{equation}
Here, $\pi_\cg$ is a passive state for the tuple $(\R_\cg,\ham)$.

To understand the interpretation of Eq.~\eqref{eq:O_ergotropy_i2} in detail, we show that we obtain the same formula when doing the operations in the reverse order: first applying the extraction operation, and then averaging over the resulting work. Consider the optimal $U$ (which we denote $U_{\mathrm{opt}}$) that transforms state $\R_\cg$ into the passive state $\pi_\cg$. The extracted work \emph{in a single realization} is random and given by the difference between the initial and the final state energy,
\[
W^\mathrm{no-meas.}_{\mathrm{single,opt.}}=\tr[\ham\R]-\tr[\ham U_{\mathrm{opt}}\big(\bigoplus_i\tilde{U}_i\big)\R\big(\bigoplus_i\tilde{U}_i^\dag\big)U_{\mathrm{opt}}^\dag].
\]
Thus, on average, the experimenter extracts $\mean{W}=\int W^\mathrm{no-meas.}_{\mathrm{single,opt.}}\, d\mu(\bigoplus_j\tilde{U}_j)$. Due to the linearity of the trace and operator multiplication, this equals $W_\C(\R)$.

In the case of simultaneous extraction, we have
\begin{equation}
\begin{split}
W_\C(\R^{\otimes N})&=\tr[\ham_N\R^{\otimes N}]-\min_U\tr[\ham_N\mathcal{U}\big(\R^{\otimes N}\big)]\\
&=N\tr[\ham\R]-\min_U\tr[\ham_N U\R_\cg^{\otimes N}U^\dag].
\end{split}
\end{equation}
From this, we obtain observational ergotropy in the large-$N$ limit,
\begin{equation}\label{eq:o_ergotropy_infty2}
W_\C^{\infty}(\R):=\lim_{N\rightarrow \infty}\frac{W_\C(\R^{\otimes N})}{N}=\tr[\ham(\R-\R_{\beta'})].
\end{equation}
As per Eq.~\eqref{eq:largeN}, temperature $\beta'$ of the thermal state is implicitly defined by requiring that its von Neumann entropy equals the observational entropy $S_\C$~\cite{Safranek2021}, 
\begin{equation}\label{eq:O_ergotropy2}
\begin{split}
S_{\vN}(\R_{\beta'})&=S_{\vN}(\R_\cg)=-\sum_i p_i \ln p_i+\sum_i p_i\ln V_i\\
&=S_\C^{S\!\!\;h}+S_\C^B=S_\C,
\end{split}
\end{equation}
which is the sum of Shannon and mean Boltzmann entropy.


\bibliography{main.bib}



\setcounter{section}{0}
\setcounter{theorem}{0}
\setcounter{corollary}{0}

\section*{Supplemental Material
}

In this Supplemental Material, we provide several proofs and derivations, as well as examples. It contains: Sec.~\ref{sec:1}: generalized formula for the maximal extracted work. Sec.~\ref{app:averaging}: proof that averaging of any initial state over local unitaries leads to the coarse-grained state. Sec.~\ref{sec:2}: limit state after $N$ measurements in stage 1, Sec.~\ref{sec:3}: Boltzmann ergotropy in the large $N$ limit,  Sec.~\ref{sec:4}: examples on a three-level system, with a special focus on understanding the coarse-grained unitary, Sec.~\ref{sec:5}: bound on energy for local energy coarse-graining.

\section{Generalization of the maximally extracted work}\label{sec:1}

Here, we will derive a generalization of the formula for the maximally extracted work,
\begin{equation}\label{eq:original_sequnce_proof_app}
\lim_{N\rightarrow \infty}\min_U\frac{\tr[\ham_N U\bigotimes_{i}\R_i^{\otimes p_iN}U^\dag]}{N}=\tr[\ham \R_{\beta}],
\end{equation}
where $S_{\vN}(\R_{\beta})=\sum_i p_i S_{\vN}(\R_i)$, assuming that the limit exists. $\{p_i\}$ is a set of probabilities, $\sum_ip_i=1$, and $\{\R_i\}$ is any set of density matrices. Eq.~\eqref{eq:original_sequnce_proof_app} corresponds to Eq.~(20) 
in the main text.

As $p_iN$ is not necessarily an integer (which is irrelevant for large $N$), to be mathematically exact, we define
\begin{equation}\label{eq:limit_state_floor}
\R^N\equiv\bigotimes_{i}\R_i^{\otimes p_iN}:=\R_1^{\otimes(N-\sum_{i\geq 2}\lfloor p_i N \rfloor)}\bigotimes_{i\geq 2}\R_i^{\otimes \lfloor p_i N \rfloor}.
\end{equation}
This ensures that there are exactly $N$ states and each state is exponentiated to an integer.

The Hamiltonian is defined as
\begin{equation}
\ham_N:=\sum_{i=1}^N h_i=\ham\!\otimes I\!\otimes\!\cdots\!\otimes\! I\,+\,I\!\otimes\!\ham\!\otimes I\!\otimes\!\cdots\!\otimes\! I\,+\,\cdots
\end{equation}
with $h_i\cong H$ for all $i$ (that is, $H_N$ is a sum of isomorphic local terms).

We have to assume that the limit in Eq.~\eqref{eq:original_sequnce_proof_app} exists because, while it seems quite natural, it is difficult to prove. Unlike in the original statement, the sequence~\eqref{eq:original_sequnce_proof_app} is not monotonously decreasing. This is because of implementation of $\R^N$ using integers, Eq.~\eqref{eq:limit_state_floor}. Elements of the sequence may temporarily spike due to the discrete nature $\R^N$, even though the sequence is expected to go down most of the time. It is clear though that the right hand side of Eq.~\eqref{eq:original_sequnce_proof_app} is a lower bound, which follows from the variational principle of statistical mechanics, which asserts that the Gibbs canonical density matrix  minimizes the free energy, 
\begin{equation}
\tr[\R\ham]-\beta^{-1}S_{\vN}(\R)\geq 
\tr[\R_\beta\ham]-\beta^{-1}S_{\vN}(\R_\beta).
\end{equation}

To prove Eq.~\eqref{eq:original_sequnce_proof_app}, we are going to show that for any $\epsilon>0$ we find an element of the sequence closer to $\tr[\ham \R_{\beta}]$ than $\epsilon$. Because we assume that the limit exists, it must be equal to this number. We assume that $\R^N$ is given by Eq.~\eqref{eq:limit_state_floor}, although the exact way how the $\R^N$ is represented using floor functions in order to be properly defined will not matter for the argument.

We pick a subsequence of the sequence obtained by substitution $N\rightarrow M\!N$ (product of $M$ and $N$). This gives
\begin{equation}
\R^{M\!N}=\R_1^{\otimes({M\!N}-\sum_{i\geq 2}\lfloor p_i {M\!N} \rfloor)}\bigotimes_{i\geq 2}\R_i^{\otimes \lfloor p_i {M\!N} \rfloor}.
\end{equation}
For now, we assume that $M$ is some large integer. Consider a similar state
\begin{equation}
\big(\R^{M}\big)^{\otimes N}=\Big(\R_1^{\otimes({M}-\sum_{i\geq 2}\lfloor p_i {M} \rfloor)}\bigotimes_{i\geq 2}\R_i^{\otimes \floor{ p_i {M}}}\Big)^{\otimes N}.
\end{equation}
Energies of $\R^{M\!N}$ and $\big(\R^{M}\big)^{\otimes N}$ are very similar. In fact, we have
\begin{equation}\label{eq:differenceMNstates}
\tr[\ham_{M\!N}\R^{M\!N}]=\tr[\ham_{M\!N}\big(\R^{M}\big)^{\otimes N}]+k\big(NM g_M+N f_N\big).
\end{equation}
where $\lim_{N\rightarrow \infty} f_N=\lim_{M\rightarrow \infty} g_M=0$ and $k$ is the number of different $i's$ (number of elements in sets $\{p_i\}$ or $\{\R_i\}$; this is by construction independent of $N$ and $M$).
This is because the exponents associated with a given $i$ differ only slightly for large enough $N$ and $M$ (Recall that permutation of density matrices does not change the energy.) For example, for an $i\geq 2$ we have
\begin{equation}
\begin{split}
\floor{p_i M N}-\floor{ p_i M }N&=N\left(\frac{\floor{p_iM N}}{N}-\floor{ p_i M }\right)\\
&=N\left(p_iM-\floor{ p_i M }+f_N\right)\\
&=NM\left(p_i-\frac{\floor{ p_i M }}{M}\right)+Nf_N\\
&=NM g_M+N f_N
\end{split}
\end{equation}
where $\lim_{N\rightarrow \infty} f_N=\lim_{M\rightarrow \infty} g_M=0$, because $\lim_{N\rightarrow \infty}  \frac{\floor{x N}}{N}=x$. Since there are $k$ indices, the total difference is of order $k$ times the above.

According to $\lim_{N\rightarrow \infty}\min_U\tr[\ham_N U\R^{\otimes N}U^\dag]/N=\tr[\ham \R_{\beta}]$, where $S_{\vN}(\R_{\beta})=S_{\vN}(\R)$, (Eq.~(19) 
in the main text), we have
\begin{equation}
\lim_{N\rightarrow \infty}\!\min_U\frac{\tr[\!\ham_{M\!N} U\big(\R^{M}\big)^{\!\otimes N}U^\dag\!]}{MN}\!=\!\frac{\tr[\ham_M \R_{\beta'}^{\otimes M}]}{M}\!=\!\tr[\ham\!\R_{\beta'}].
\end{equation}
where $S_{\vN}(\R_{\beta'}^{\otimes M})=S_{\vN}(\R^M)$, which written explicitly gives
\begin{equation}\label{eq:rhobetaM}
\begin{split}
S_{\vN}(\R_{\beta'})&=\bigg(1-\sum_{i\geq 2}\frac{\floor{p_i {M}}}{M}\bigg) S(\R_1) +\sum_{i\geq 2} \frac{\floor{ p_i {M}}}{M} S(\R_i).
\end{split}
\end{equation}
Eq.~\eqref{eq:differenceMNstates} also holds for the transformed states, since the same argument can be made. Thus we have,
\begin{equation}
\begin{split}
&\lim_{N\rightarrow \infty}\!\min_U\frac{\tr[\!\ham_{M\!N} U\R^{M\!N}U^\dag\!]}{MN}\\
&=\lim_{N\rightarrow \infty}\!\min_U\frac{\tr[\!\ham_{M\!N} U\big(\R^{M}\big)^{\!\otimes N}U^\dag\!]}{MN}+k g_M+k\frac{f_N}{M}\\
&=\tr[\ham\!\R_{\beta'}]+k g_M,
\end{split}
\end{equation}
where $\R_{\beta'}$ is given by Eq.~\eqref{eq:rhobetaM}. Taking the limit $M\rightarrow \infty$, the thermal state energy $\tr[\ham\!\R_{\beta'}]$ converges to $\tr[\ham\!\R_{\beta}]$ and $g_M$ converges to zero. Therefore, for any $\epsilon>0$ we can find $M$ and $N$ such that an element of sequence Eq.~\eqref{eq:original_sequnce_proof_app}, defined by $\R^{M\!N}$, is closer to $\tr[\ham\!\R_{\beta}]$ than $\epsilon$. Because we assume that the sequence converges, it must converge to this value.

\section{Averaging of a state over the direct sum of unitaries with the Haar measure leads to the coarse-grained state---proof}\label{app:averaging}
Here we prove the statement
\begin{equation}\label{eq:global_extraction_operation_app}
    \mathcal{U}(\R)=U\Big( \int (\bigoplus_i\tilde{U}_i)\R (\bigoplus_i\tilde{U}_i^\dag) d\mu(\bigoplus_i\tilde{U}_i)\Big) U^\dag=U\R_\cg U^\dag,
\end{equation}
where 
\[
\R_\cg=\sum_i\frac{p_i}{V_i}\P_i,
\]
and $d\mu$ is the normalized ($\int d\mu(\bigoplus_i\tilde{U}_i)=1$) Haar measure, which is Eq. (7) in the main text. The global unitary $U$ is superfluous, so we can reduce the statement to show that averaging over the Haar measure of the direct sum of unitaries leads to a coarse-grained state, i.e.,
\begin{equation}
\int (\bigoplus_i{U}_i)\R (\bigoplus_i{U}_i^\dag) d\mu(\bigoplus_i{U}_i)=\R_\cg.
\end{equation}
We also dropped the tilde to increase clarity.

We do this in several stages by proving two Lemmas first.

\begin{lemma}
Let $\ket{\psi}$ be a pure state. Let $d\mu(U)$ be the Haar measure over unitary group on the full Hilbert space. Then
\[
\int U\pro{\psi}{\psi}U^\dag d\mu(U)=I/d,
\]
where $I/d$ is the maximally mixed state, $I$ is the identity operator and $d$ is the dimension of the Hilbert space.
\end{lemma}
\begin{proof}
Let $U'$ be a fixed unitary of our choice. We have
\[
\begin{split}
\int U\pro{\psi}{\psi}U^\dag d\mu(U)&=\int \tilde U U'\pro{\psi}{\psi} U'^\dag \tilde U^\dag d\mu(\tilde U U')\\
&=\int \tilde U \pro{\psi'}{\psi'} \tilde U^\dag d\mu(\tilde U U')\\
&=\int \tilde U \pro{\psi'}{\psi'} \tilde U^\dag d\mu(\tilde U)\\
&=\int U \pro{\psi'}{\psi'} U^\dag d\mu(U)\\
\end{split}
\]
where we have defined $\tilde U= U U'^\dag$, and $\ket{\psi'}=U'\ket{\psi}$. In the third equality, we used the right invariance of the Haar measure (which we can do because the unitary group is unimodular and thus its Haar measure is both left- and right-invariant). In the fourth equality, we relabeled $\tilde U$ as $U$. $U'$ was arbitrary, and any vector of the Hilbert space can be transformed to any other vector in a Hilbert space with a unitary operator, which means $\ket{\psi'}$ can be absolutely any pure state. Let us choose $\ket{\psi'}$ to be orthogonal to $\ket{\psi}$. We repeat this until we form a basis of $d$ vectors $\{\ket{\psi_i}\}_{i=1}^d$ ($\ket{\psi}=\ket{\psi_1}$, $\ket{\psi'}=\ket{\psi_2}$ and so on.) From the above, we know that all of these averages are equal to the same operator,
\[
\int U \pro{\psi_1}{\psi_1} U^\dag d\mu(U)=\cdots=\int U \pro{\psi_d}{\psi_d} U^\dag d\mu(U)=:O.
\]
This means that we can write
\[
\begin{split}
O&=\frac{1}{d}\sum_{i=1}^d\int U \pro{\psi_i}{\psi_i} U^\dag d\mu(U)\\
&=\frac{1}{d}\int U \sum_{i=1}^d\pro{\psi_i}{\psi_i} U^\dag d\mu(U)\\
&=\frac{1}{d}\int U I U^\dag d\mu(U)\\
&=\frac{1}{d}\int I d\mu(U)\\
&=I/d\\
\end{split}
\]
where we used the fact that for an orthonormal basis, $I=\sum_{i=1}^d\pro{\psi_i}{\psi_i}$.
\end{proof}

\begin{lemma}\label{lem:mixing}
For any state $\R$, we have
\[
\int U\R U^\dag d\mu(U)=I/d.
\]
\end{lemma}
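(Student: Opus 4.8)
The plan is to reduce the general mixed-state claim to the pure-state result already established in Lemma 1, using the spectral decomposition of the density matrix together with linearity of the Haar integral. First I would diagonalize the state, writing $\R=\sum_k\lambda_k\pro{\phi_k}{\phi_k}$, where $\{\ket{\phi_k}\}$ is an orthonormal eigenbasis and the eigenvalues satisfy $\lambda_k\geq 0$ and $\sum_k\lambda_k=1$, which is exactly positivity and normalization of a density matrix.

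Next I would exploit linearity of the integral over the unitary group. Since the integrand $U\R U^\dag$ is a finite linear combination of the rank-one terms $U\pro{\phi_k}{\phi_k}U^\dag$, and integration over the compact unitary group commutes with finite sums and scalar multiplication of operator-valued functions, one obtains
\[
\int U\R U^\dag\, d\mu(U)=\sum_k\lambda_k\int U\pro{\phi_k}{\phi_k}U^\dag\, d\mu(U).
\]

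Then I would apply Lemma 1 to each eigenvector: each $\ket{\phi_k}$ is a pure state, so $\int U\pro{\phi_k}{\phi_k}U^\dag\, d\mu(U)=\I/d$ for every $k$. Substituting and factoring out the common operator gives $\sum_k\lambda_k\,\I/d=(\I/d)\sum_k\lambda_k=\I/d$, where the last equality uses the normalization $\sum_k\lambda_k=1$. This is precisely the claimed identity.

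The main point is that there is essentially no obstacle here beyond bookkeeping: all the genuine content lives in Lemma 1, and the present statement is just the averaging of that linear map over the spectral ensemble of $\R$. The only step worth a word of justification is the interchange of the integral with the finite sum, which is immediate since the group is compact and the Hilbert space finite-dimensional, so the integrand is bounded and the sum has finitely many terms.
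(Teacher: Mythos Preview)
Your proposal is correct and follows exactly the same approach as the paper: spectral decomposition of $\R$, linearity of the Haar integral, application of Lemma~1 to each eigenvector, and normalization $\sum_k\lambda_k=1$. The paper's proof is the same argument in a single displayed line.
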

\begin{proof}
Every state has a spectral decomposition $\R=\sum_i\lambda_i\pro{\psi_i}{\psi_i}$. We have
\[
\begin{split}
\int U\R U^\dag d\mu(U)&=\sum_i\lambda_i\int U\pro{\psi_i}{\psi_i} U^\dag d\mu(U)\\
&=\sum_i\lambda_i I/d=I/d.
\end{split}
\]
\end{proof}

Finally, we obtain the statement of the theorem
\begin{theorem}
We have
\[
\int (U_1\oplus U_2)\R (U_1^\dag\oplus U_2^\dag) d\mu(U_1\oplus U_2)=\frac{p_1}{V_1}\P_2+\frac{p_1}{V_1}\P_2.
\]
\end{theorem}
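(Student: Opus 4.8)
The plan is to exploit the block structure induced by the orthogonal decomposition $\HS=\HS_1\oplus\HS_2$ and to reduce the computation to the already-established Lemma~\ref{lem:mixing}. First I would write $\R$ in block form relative to the two subspaces by setting $\R_{ij}:=\P_i\R\P_j$, so that $\R=\sum_{i,j\in\{1,2\}}\R_{ij}$. Since $U_1\oplus U_2$ acts as $U_i$ on $\HS_i$, conjugation sends each block to $U_i\R_{ij}U_j^\dag$; and because the Haar measure factorizes as $d\mu(U_1\oplus U_2)=d\mu(U_1)\,d\mu(U_2)$, every block may be integrated independently over its own subspace unitary group.

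For the two diagonal blocks I would invoke Lemma~\ref{lem:mixing} in the marginally more general operator form $\int U A U^\dag\,d\mu(U)=(\tr A/d)\,I$, valid for an arbitrary operator $A$; the extension from normalized states to general $A$ is immediate by linearity (decompose $A$ into Hermitian parts and then into scaled eigenprojectors, or simply rerun the proof of the lemma verbatim). Applied on the subspace $\HS_i$ of dimension $V_i$, this yields $\int U_i\R_{ii}U_i^\dag\,d\mu(U_i)=(\tr\R_{ii}/V_i)\,\P_i=(p_i/V_i)\,\P_i$, where I have used $\tr\R_{ii}=\tr[\P_i\R\P_i]=\tr[\P_i\R]=p_i$ and identified the identity on $\HS_i$ with $\P_i$ inside the full Hilbert space.

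The one genuinely new ingredient is the vanishing of the off-diagonal blocks. There the integrand is linear separately in $U_1$ and $U_2$, so by Fubini the double integral factorizes and it suffices to show that the first Haar moment vanishes, $\int U\,d\mu(U)=0$. I would establish this by a short invariance argument: setting $M:=\int U\,d\mu(U)$, left-invariance of the Haar measure gives $WM=M$ for every fixed unitary $W$; taking $W=e^{i\theta}I$ forces $(e^{i\theta}-1)M=0$ for all $\theta$, whence $M=0$. Consequently $\int U_1\R_{12}U_2^\dag\,d\mu=0$, and likewise for the $(2,1)$ block.

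Collecting the surviving diagonal contributions then gives $\frac{p_1}{V_1}\P_1+\frac{p_2}{V_2}\P_2=\R_\cg$, which is the claim; the identical block argument extends verbatim to the general $k$-subspace decomposition $\R_\cg=\sum_i(p_i/V_i)\P_i$. I do not anticipate a serious obstacle here, since the structural content is carried entirely by the block decomposition together with Lemma~\ref{lem:mixing}. The only mild subtleties --- extending that lemma to non-normalized, non-Hermitian blocks and justifying the vanishing first moment --- are each dispatched by the one-line linearity and invariance arguments above.
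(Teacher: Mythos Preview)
Your proposal is correct and follows essentially the same route as the paper: block-decompose $\R$ via the projectors $\P_i$, use the factorization of the Haar measure, apply Lemma~\ref{lem:mixing} on the diagonal blocks, and kill the off-diagonal blocks by an invariance argument. The only cosmetic difference is in that last step---the paper substitutes $U_2\mapsto(-I)\tilde U_2$ to show the cross term equals its own negative, whereas you prove $\int U\,d\mu(U)=0$ via the phase $e^{i\theta}I$; these are the same idea (yours is the paper's argument with $\theta$ in place of $\pi$).
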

\begin{proof}
Before moving to the actual proof of the final statement, let us illustrate the meaning of 
\[
(U_1\oplus U_2)\R (U_1^\dag\oplus U_2^\dag)
\]
on its corresponding matrix representation, which will make the rest of the proof clearer. We have block-matrix representations
\[
U_1\oplus U_2=
\begin{pmatrix}
    U_1 & 0 \\
    0 & U_2
\end{pmatrix},\quad U_1^\dag\oplus U_2^\dag=
\begin{pmatrix}
    U_1^\dag & 0 \\
    0 & U_2^\dag
\end{pmatrix}
\]
and 
\[
\R=
\begin{pmatrix}
    \R_{11} & \R_{12} \\
    \R_{21} & \R_{22}
\end{pmatrix}.
\]
Thus, in the block-matrix representation, we have 
\[\label{eq:direct_sum_on_matrix}
\begin{split}
(U_1\oplus U_2)\R (U_1^\dag\oplus U_2^\dag)=\begin{pmatrix}
    U_1\R_{11}U_1^\dag & U_1\R_{12}U_2^\dag \\
    U_2\R_{21}U_1^\dag & U_2\R_{22}U_2^\dag
\end{pmatrix}.
\end{split}
\]
$\R_{11}$ is the part of the density matrix that lives in the subspace $P_1\HS P_1$. To be more precise, it is isomorphic to the projection of the density matrix onto this subspace, $P_1\R P_1$. The operator $P_1\R P_1$ is technically an operator on the entire Hilbert space, but with its support only in the subspace. On the full Hilbert space, the operator $P_1\R P_1$ has a block-matrix representation
\[
P_1\R P_1=\begin{pmatrix}
    \R_{11} & 0 \\
    0 & 0
\end{pmatrix}
\]
Similarly, the $\R_{22}$ is isomorphic to $P_2\R P_2$, and then non-diagonal block terms: $\R_{12}$ is isomorphic to $P_1\R P_2$ and $\R_{21}$ is isomorphic to $P_2\R P_1$. 

Now we move to the actual proof which is done in the operator form. We can write
\[
\R=P_1\R P_1 + P_1\R P_2+P_2\R P_1+P_2\R P_2,
\]
which follows from the completeness relation $P_1+P_2=I$.

In the operator representation,  Eq.~\eqref{eq:direct_sum_on_matrix} rewrites as
\[
\begin{split}
   &(U_1\oplus U_2)\R (U_1^\dag\oplus U_2^\dag)=\\
   &U_1P_1\R P_1U_1^\dag + U_1P_1\R P_2U_2^\dag+U_2P_2\R P_1U_1^\dag+U_2P_2\R P_2U_2^\dag.
\end{split}
\]
In the above, we used a shorthand $U_1\equiv U_1\oplus 0$ and $U_2\equiv 0\oplus U_2$, since $U_1$ and $U_2$ technically act only at subspaces one and two, respectively.

The Haar measure factorizes:
\[
d\mu(U_1\oplus U_2)=d\mu(U_1)d\mu(U_2),
\]
where $d\mu(U_1)$ and $d\mu(U_2)$ are the normalized Haar measures on unitary operators applied on subspaces. This is justified as follows: The Haar measure considered here is that defined on the group of unitary operators that can be written as a direct sum of unitary operators applied on the subspaces. Defining a measure on this group by the right-hand side of the above equation, we can show that all of the properties of the Haar measure are satisfied. The Haar measure is unique up to a multiplicative constant. Here we consider the normalized Haar measure, which means that it is unique. Thus, the measure defined by the right-hand side must be the Haar measure and the equation holds.

Therefore, taking the averaging, we obtain a sum of four terms:
\[
\begin{split}
   &\int (U_1\oplus U_2)\R (U_1^\dag\oplus U_2^\dag) d\mu(U_1\oplus U_2)\\
   &= \int U_1P_1\R P_1U_1^\dag d\mu(U_1)d\mu(U_2)\\
   &+\int U_1P_1\R P_2U_2^\dag d\mu(U_1)d\mu(U_2)\\
   &+\int U_2P_2\R P_1U_1^\dag d\mu(U_1)d\mu(U_2)\\
   &+\int U_2P_2\R P_2U_2^\dag d\mu(U_1)d\mu(U_2).
\end{split}
\]

Taking the first term, we have 
\[
\begin{split}
    \int U_1P_1&\R P_1U_1^\dag d\mu(U_1)d\mu(U_2)=\int U_1P_1\R P_1U_1^\dag d\mu(U_1)\\
    &=p_1\int U_1\frac{P_1\R P_1}{p_1}U_1^\dag d\mu(U_1)=p_1 \frac{P_1}{V_1},
\end{split}
\]
where $p_1=\tr[P_1\R]$ is the normalization factor. In the last equality, we applied Lemma~\ref{lem:mixing} on matrix $\frac{P_1\R P_1}{p_1}$, where we have used that $P_1$ is the identity matrix on the subspace $P_1\HS P_1$, and $V_1=\tr[P_1]$ is the dimension of this subspace.

The second term is zero. We prove this by a similar method to the proof of Lemma~\ref{lem:mixing}. We use that the minus identity, $-I$, is a unitary operator. We have
\[
\begin{split}
    \int U_1P_1&\R P_2U_2^\dag d\mu(U_1)d\mu(U_2)\\
    &=\int U_1P_1\R P_2(\tilde U_2^\dag(-I) ) d\mu(U_1)d\mu((-I) \tilde U_2)\\
    &=\int U_1P_1\R P_2(\tilde U_2^\dag(-I)) d\mu(U_1)d\mu(\tilde U_2)\\
    &=-\int U_1P_1\R P_2U_2^\dag d\mu(U_1)d\mu(U_2).
\end{split}
\]
We have defined $U_2=(-I) \tilde U_2$. In the second equality, we used the invariance of the Haar measure. In the last equality, we relabeled $\tilde U_2$ as $U_2$. We showed that the second term is equal to its negative, therefore, it must be zero.

The last two terms proceed analogously, and we obtain 
\[
\int (U_1\oplus U_2)\R (U_1^\dag\oplus U_2^\dag) d\mu(U_1\oplus U_2)=\frac{p_1}{V_1}P_1+\frac{p_2}{V_2}P_2.
\]
\end{proof}
Clearly, the theorem generalizes to any number of projectors.

\section{Derivation of the limit state in Stage 1.}\label{sec:2}

In this section, we show that the limiting state after $N$ measurements in Stage 1.~is given by
\begin{equation}\label{eq:limit_state_stage1}
\R_N=\bigotimes_i\R_i^{\otimes p_i N}\equiv \R^N
\end{equation}
up to a permutation in the order of tensor products, as $N$ to infinity. This is Eq.~(25) 
in the main text. The left hand side is a random variable (random state) $\R_N=\R_{i_1}\otimes\cdots\otimes\R_{i_N}$ which is produced with probability $p_{i_1\dots i_N}=p_{i_1}\dots p_{i_N}$ after measuring each copy produced by the source (see the proof below for details). The right hand side is defined by Eq.~\eqref{eq:limit_state_floor}. In mathematical terms, we show that $\R_N$ converges to $\bigotimes_i\R_i^{\otimes p_i N}$ \emph{almost surely}. To show that, we first need to introduce relevant terminology and results.

For independent and identical distributed (i.i.d.) source $X$ (meaning that all $X_i=X$ are the same random variables), the Asymptotic Equipartition Property states that for any $\epsilon>0$,
\begin{equation}\label{eq:AEP}
\lim_{N\rightarrow \infty} \mathrm{Pr}\left[\abs{-\frac{1}{N}\ln p(X_1,\dots,X_N)-H(X)}>\epsilon\right]=0.
\end{equation}
where $H(X)=-\sum_Xp(X)\ln p(X)$ is the Shannon entropy of the outcomes, and $p_i$ probability of event $X_i$. (See, e.g.~\cite{cover1999elements}.)
In other words, for any fixed $\epsilon$, probability that we find sequence $(X_1,\dots,X_N)$ to have probability $p(X_1,\dots,X_N)$ outside these bounds:
\begin{equation}\label{eq:typical_set}
e^{-N(H(X)+\epsilon)}\leq p(X_1,\dots,X_N)\leq e^{-N(H(X)-\epsilon)}
\end{equation}
is zero as $N$ goes to infinity., i.e., vast majority of probabilities will fall within these bounds for large $N$. Eq.~\eqref{eq:typical_set} is the defining property of the typical set: typical set is defined as $B_N^\epsilon=\{(X_1,\dots,X_N)|\, \mathrm{Eq.}~\eqref{eq:typical_set}\text{ holds}\}$.

For i.i.d.~sources we actually have a stronger property: almost sure convergence,
\begin{equation}\label{eq:almost_sure}
\mathrm{Pr}\left[\lim_{N\rightarrow\infty}-\frac{1}{N}\ln p(X_1,\dots,X_N)=H(X)\right]=1,
\end{equation}
which is implied by the strong law of large numbers. This states that any sequence of events $(X_1,\dots,X_N)$ for which
\begin{equation}
p(X_1,\dots,X_N)\neq e^{-N H(X)}
\end{equation}
have probability zero, as $N$ goes to infinity.

We can now proceed with the proof.

\begin{proof}
We will show that in the limit of large $N$, the resulting state is almost surely of form~\eqref{eq:limit_state_floor} up to a different order of tensor products (i.e., up to a permutation).

For a single measurement, the resulting state is $\R_i$ with probability $p_i\equiv p(\R_i)$, for two measurements, the resulting state is $\R_{i_1}\otimes \R_{i_2}$ with probability $p_{i_1i_2}=p_{i_1}p_{i_2}$, for $N$ measurements, it is $\R_{i_1}\otimes \cdots\otimes \R_{i_N}$ with probability $p_{i_1\dots i_N}=p_{i_1}\cdots p_{i_N}$. The sequence $\R_N=\R_{i_1}\otimes \cdots\otimes \R_{i_N}$ is therefore generated by an i.i.d. process in which the next element is given by $\R_i$ with probability $p_i$. We define a set
\begin{equation}
A_N=\{\pi(\R_1^{\otimes(N-\sum_{i\geq 2}\lfloor p_i N \rfloor)}\bigotimes_{i\geq 2}\R_i^{\otimes \lfloor p_i N \rfloor})\}_\pi,
\end{equation}
where $\pi$ defines all permutations on the tensor product. The probability of any state from this set is 
\begin{equation}
p_{i_1\dots i_N}=p_1^{N-\sum_{i\geq 2}\lfloor p_i N \rfloor}\prod_{i\geq 2}p_i^{\lfloor p_i N \rfloor}.
\end{equation}
Taking the logarithm we have
\begin{equation}
\begin{split}
-\frac{1}{N}\ln p_{i_1\dots i_N}&=-\frac{1}{N}\left(\Big(N\!\!-\!\!\sum_{i\geq 2}\lfloor p_i N \rfloor\Big)\ln p_1+\sum_{i\geq 2} {\lfloor p_i N \rfloor} \ln p_i\right)\\
&=-\sum_i p_i\ln p_i+\mathcal{O}(\tfrac{1}{N})=H(\R)+\mathcal{O}(\tfrac{1}{N}).
\end{split}
\end{equation}
Since $\mathcal{O}(\tfrac{1}{N})$ will be smaller than any $\epsilon$, Eq.~\eqref{eq:AEP} is satisfied for every element of set $A_N$. Further, due to the number of permutations $\pi$ that generate a different sequence, the number of elements in the set is given by a multinomial,
\begin{equation}
\abs{A_N}=\frac{N!}{(N-\sum_{i\geq 2}\lfloor p_i N \rfloor)!\lfloor p_2 N \rfloor!\cdots \lfloor p_k N \rfloor!}
\end{equation}
Taking the logarithm and using Stirling's approximation $\ln N!=n\ln N+N+\frac{1}{2}\ln (2\pi N)$ we derive
\begin{equation}
\ln\abs{A_N}=N\left(H(\R)+\mathcal{O}\left(\tfrac{\ln(N)}{N}\right)\right).
\end{equation}
Again, $\mathcal{O}\left(\tfrac{\ln(N)}{N}\right)$ will be smaller than any $\epsilon$ for large enough $N$. This gives,
\begin{equation}
e^{N(H(\R)-\epsilon)}\leq\abs{A_N}\leq e^{N(H(\R)+\epsilon)},
\end{equation}
where $H(\R)=-\sum_i p_i\ln p_i$. This means that set $A_N$ contains elements with the same probability as the typical set 
\begin{equation}
B_N^\epsilon\!=\!\left\{\R_{i_1}\!\otimes\! \cdots \!\otimes\! \R_{i_N}\lvert e^{-N(H(\R)+\epsilon)}\leq p_{i_1\dots i_N}\leq e^{-N(H(\R)-\epsilon)}\right\}
\end{equation}
does, and its size (cardinality) is also the same as the typical set. This means that there will be a large overlap between the two and that they will share the same properties. Namely, for any arbitrarily small $\delta>0$, the probability of finding the randomly generated sequence $\R_{i_1}\!\otimes\! \cdots \!\otimes\! \R_{i_N}$ in $A_N$ is $\Pr{A_N}\geq 1-\delta$, if $N$ is taken to be large enough~\cite{cover1999elements}. Even the stronger condition of almost sure convergence, Eq.~\eqref{eq:almost_sure}, holds, meaning that for a large enough $N$, sequences that do not fall into set $A_N$ have probability zero. This completes the proof that for a large $N$, all states have the form \eqref{eq:limit_state_floor} (or Eq.~\eqref{eq:limit_state_stage1} if we gloss over the non-integer exponents), up to a permutation in the order of tensor products.
\end{proof}

\section{Boltzmann ergotropy in the large $N$ limit}\label{sec:3}

In this section, we combine  Eqs.~\eqref{eq:original_sequnce_proof_app} and~\eqref{eq:limit_state_stage1} (corresponding to Eqs.~(20) 
and~(25) 
in the main text) to derive expression for the Boltzmann ergotropy in the limit of large $N$, 
\begin{equation}
W_\C^{B\infty}(\R):=\lim_{N\rightarrow \infty}\frac{W_\C^B(\R^{\otimes N})}{N}=\tr[\ham(\R-\R_{\beta})],
\end{equation}
where temperature $\beta$ is implicitly defined by $S_{\vN}(\R_\beta)=S_\C^{B}:=\sum_i p_i\ln V_i$.

To clarify, note that $\R^{\otimes N}$ above denotes the initial state produced by the source, i.e., the source produced $N$ identical copies of the same unknown state $\R$. Compare it to a random variable (random density matrix) $\R_N=\R_{i_1}\otimes\cdots\otimes\R_{i_N}$ which is produced with probability $p_{i_1\dots i_N}=p_{i_1}\dots p_{i_N}$ after measuring each copy produced by the source, and the limiting state $\R^N=\bigotimes_i\R_i^{\otimes p_i N}$, defined by Eq.~\eqref{eq:limit_state_floor}, to which $\R_N$ almost surely converges.

Recall the definition of the multipartite unitary extraction when averaged over many realizations of the protocol, Eq.~(9) 
in the main text:
\begin{equation}
\label{eq:global_extraction_operation_multiple_app}
\mathcal{U}(\R^{(1)}\otimes\cdots \otimes\R^{(N)})=U\R_\cg^{(1)}\otimes\cdots \otimes\R_\cg^{(N)}U^\dag,
\end{equation}
Using the same logic as in Eq.~(30) 
in the main text, 
we apply $\mathcal{U}$ to $\R_N$ to obtain
\begin{equation}\label{eq:WcB}
\begin{split}
&W_\C^{B}(\R^{\otimes N})=\tr[\ham_N\R^{\otimes N}]- \min_U\tr[\ham_N\mathcal{U}\big(\R_N\big)]\\
&=\tr[\ham_N\R^{\otimes N}]- \min_U\tr[\ham_N\mathcal{U}\big(\R^N\big)]\\
&=N\tr[\ham\R]-\min_U\tr[\ham_N U\Big(\bigotimes_i(\P_i/V_i)^{\otimes p_i N}\Big)U^\dag].
\end{split}
\end{equation}
where we used Eq.~\eqref{eq:limit_state_stage1} in the second equality, assuming that $N$ is large.
Dividing this equation by $N$ and using Eq.~\eqref{eq:original_sequnce_proof_app}, we have
\begin{equation}
\lim_{N\rightarrow \infty}\frac{W_\C^B(\R^{\otimes N})}{N}=\tr[\ham\R]-\tr[\ham\R_\beta],
\end{equation}
where
\begin{equation}
S_{\vN}(\R_{\beta})=\sum_i p_i S_{\vN}(\P_i/V_i)=\sum_i p_i \ln V_i =:S_\C^{B}.
\end{equation}
This concludes the proof.

\begin{figure}[t]
\begin{center}
\includegraphics[width=1\hsize]{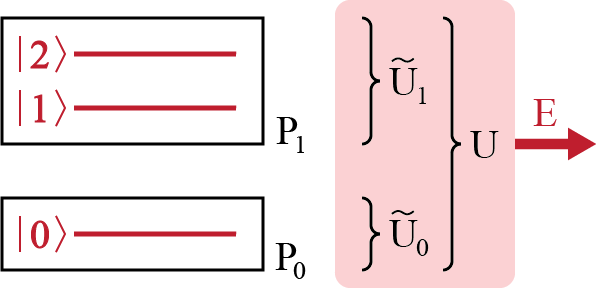}\\
\caption
{Example of a three-level system with two-outcome coarse-graining. $\P_0$ and $\P_1$ represents two macrostates, $\ket{0}$, $\ket{1}$, $\ket{2}$ energy eigenstates, $\tilde{U}_0$ and $\tilde{U}_1$ random unitaries acting on the first and the second subspace-macrostate, respectively, and $U$ is the global extraction unitary.}
\label{Fig:threelevel}
\end{center}
\end{figure}

\section{Simple analytic example}\label{sec:4}

Here we illustrate the coarse-grained extraction unitary,
\begin{equation}
U\bigoplus_i \tilde U_i,
\end{equation}
where $U_i$ are random unitary operators picked with the Haar measure acting on macrostates, and $U$ is the global unitary, on an example depicted in Fig~\ref{Fig:threelevel}. This is Eq.~(5) 
in the main text. Then we compute corresponding Boltzmann and observational ergotropy for several sources of unknown states.

Consider a three-level system described by Hamiltonian
\begin{equation}
\ham=E_0\pro{0}{0}+E_1\pro{1}{1}+E_2\pro{2}{2}.
\end{equation}
We form a two-outcome coarse-graining by grouping together the two excited states, while the ground state will form its own macrostate. Mathematically, we have
\begin{equation}
\C=\{\P_0, P_{1}\}
\end{equation}
as our coarse-graining, where
\begin{equation}\label{eq:cg_two_outcomes}
\P_0=\pro{0}{0},\quad \P_{1}=\pro{1}{1}+\pro{2}{2}
\end{equation}
are the corresponding projectors. The corresponding subspaces-macrostates are
\begin{equation}
\HS_0=\mathrm{span}\{\ket{0}\},\quad \HS_1=\mathrm{span}\{\ket{1},\ket{2}\}.
\end{equation}
Clearly, we have $\HS=\HS_0\oplus \HS_1$.

The random unitaries are of the form
\begin{equation}
\tilde U_0\oplus \tilde U_1=
\begin{pmatrix}
\tilde U_0 & \begin{matrix} 0 & 0 \end{matrix} \\
\begin{matrix} 0 \\ 0 \end{matrix} & \tilde  U_1
\end{pmatrix},
\end{equation}
where $\tilde  U_0$ is a $1\times 1$ random unitary matrix  and $\tilde U_1$ is a $2\times 2$ random unitary matrix. (For an efficient sampling method of unitary operators from the Haar measure see, e.g., Ref.~\cite{lundberg2004haar}.) The global extraction unitary $U$ is a general, $3 \times 3$ unitary matrix.

\subsection{Three pure states that are indistinguishable by the measurement}

We illustrate the difference the effects of the coarse-grained extraction unitary and a generic extraction unitary on several different states that could be produced by an unknown source.

Consider three states:
\begin{equation}
\R_A=\pro{1}{1},\quad \R_B=\pro{2}{2},\quad \R_C=\pro{+}{+},
\end{equation}
where $\ket{+}=(\ket{1}+\ket{2})/\sqrt{2}$. All of these have the same probabilities of outcomes when performing a measurement given by the coarse-graining~\eqref{eq:cg_two_outcomes},
\begin{equation}
p_{0}=0,\quad p_1=1.
\end{equation}
Because the measurement outcome is $i=1$ with certainty, in all cases the post-measurement state is the same as the initial state, $\R_1^X=\R_X$ ($X=A,B,C$). It also means that the corresponding coarse-grained density matrix obtained by averaging over the random unitaries picked with the Haar measure,
\begin{equation}
\begin{split}
  \R_{\cg}&=\int \big(\tilde U_0\oplus \tilde U_1\big) \R_X \big(\tilde U_0^\dag\oplus \tilde U_1^\dag\big) d\mu(\tilde U_0\oplus \tilde U_1)\\
&=\frac{p_0}{V_0}\P_0+\frac{p_1}{V_1}\P_1\\
&=0\begin{pmatrix}
1 & 0 & 0 \\
0 & 0 & 0 \\
0 & 0 & 0 
\end{pmatrix}+\frac{1}{2}
\begin{pmatrix}
0 & 0 & 0 \\
0 & 1 & 0 \\
0 & 0 & 1 
\end{pmatrix}=\begin{pmatrix}
0 & 0 & 0 \\
0 & \tfrac{1}{2} & 0 \\
0 & 0 & \tfrac{1}{2} 
\end{pmatrix},
\end{split}
\end{equation}
is the same of all of them (where $X=A,B,C$, $V_0=1$, $V_1=2$). The optimal work extraction unitary from this state is given by
\begin{equation}\label{eq:U_extraction}
U=\pro{0}{2}+\pro{2}{0}+\pro{1}{1}=\begin{pmatrix}
0 & 0 & 1 \\
0 & 1 & 0 \\
1 & 0 & 0 
\end{pmatrix}.
\end{equation}
This unitary turns the coarse-grained state into a passive state,
\begin{equation}
\pi_\cg=U\R_\cg U^\dag=\mathcal{U}(\R_X)=\begin{pmatrix}
\tfrac{1}{2} & 0 & 0 \\
0 & \tfrac{1}{2} & 0 \\
0 & 0 & 0 
\end{pmatrix}.
\end{equation}
The initial energies of these three states are different (and unknown to the experimenter), and given by
\begin{equation}
\begin{split}
    \tr[\ham \R_A]&=E_1,\\
    \tr[\ham \R_B]&=E_2,\\
    \tr[\ham \R_C]&=(E_1+E_2)/2.
\end{split}
\end{equation}
However, in all of these cases the energy of the respective state is reduced to the same value through the optimal work extraction unitary $U$, i.e.,
\begin{equation}
\tr[\ham \pi_\cg]=(E_0+E_1)/2,
\end{equation}
where $X=A,B,C$.

The corresponding Boltzmann ergotropy is given by
\begin{equation}
W_\C^{B}(\R_X)=\mathrm{tr}[\ham(\R_X-\pi_\cg)].
\end{equation}
This yields
\begin{equation}
\begin{split}
    W_\C^{B}(\R_A)&=(E_1-E_0)/2,\\
    W_\C^{B}(\R_B)&=E_2-(E_0+E_1)/2,\\
    W_\C^{B}(\R_C)&=(E_2-E_0)/2.\\
\end{split}
\end{equation}
Note that the observational ergotropy is the same as the Boltzmann ergotropy in this case. This is because $p_0=0$ and $p_1=1$.

\subsection{Comparison with blind direct extraction}

Let us compare the above strategy to the blind strategy of extraction, without the averaging using random unitaries picked with the Haar measure. The experimenter does not know the actual state of the system; their only knowledge is of the probabilities $p_0=0$ and $p_1=1$. Let us say that they decide to extract energy directly by performing a unitary operation $U$ which transfers the energy from the second excited state to the ground state, given by Eq.~\eqref{eq:U_extraction}.
In this case the final state energy is
\begin{equation}
\tr[\ham \R_X^{\mathrm{final}}]=\tr[\ham\, U\R_X U^\dag].
\end{equation}
This gives
\begin{equation}
\begin{split}
    \tr[\ham \R_A^{\mathrm{final}}]&=E_1\\
    \tr[\ham \R_B^{\mathrm{final}}]&=E_0,\\
    \tr[\ham \R_C^{\mathrm{final}}]&=(E_0+E_1)/2.\\
\end{split}
\end{equation}
The total extracted work is then given by
\begin{equation}
W (\R_X)=\tr\big[\ham \big(\R_X-\R_X^{\mathrm{final}}\big)\big].
\end{equation}
This gives
\begin{equation}
\begin{split}
    W (\R_A)&=0\\
    W (\R_B)&=E_2-E_0,\\
    W (\R_C)&=(E_2-E_0)/2.\\
\end{split}
\end{equation}
Clearly, this makes a very unreliable source, strongly depending on the unknown state of the system and on how lucky is the experimenter in guessing which extraction unitary $U$ to apply. While in some cases, the extracted energy is quite high due to being lucky, in some other cases the strategy fails.

\subsection{What if the initial state was known?}

Since all the states $A,B,C$ are pure, if the experimenter knew what the states are, they would be able to extract all the mean energy by transferring the state into the ground state. The extracted work would be given by the usual notion of ergotropy, which leads to
\begin{equation}
W_X=\tr[\ham \R_X]-E_0.
\end{equation}
This is, however, not the case in our setup.

\subsection{Another example}
\subsubsection{Boltzmann ergotropy vs blind extraction}
Finally, we show an example that can be very unfavorable to the blind direct extraction. Consider a state
\begin{equation}
\R_D=\frac{1}{8}\pro{0}{0}+\frac{7}{8}\pro{1}{1}=\begin{pmatrix}
\tfrac{1}{8} & 0 & 0 \\
0 & \tfrac{7}{8} & 0 \\
0 & 0 & 0
\end{pmatrix},
\end{equation}
which has the mean initial energy
\begin{equation}
\tr[\ham\, \R_D]=\tfrac{1}{8}E_0+\tfrac{7}{8}E_1.
\end{equation}
We have $p_0=\frac{1}{8}$ and $p_1=\frac{7}{8}$. 

If the outcome of the measurement is $i=1$ we obtain the projected state $\R_1=\pro{1}{1}$ after the measurement. The corresponding coarse-grained state is given by
\begin{equation}
\R_\cg^1=\P_1/V_1=\begin{pmatrix}
0 & 0 & 0 \\
0 & \tfrac{1}{2} & 0 \\
0 & 0 & \tfrac{1}{2} 
\end{pmatrix}.
\end{equation}
The optimal extraction unitary is again Eq.~\eqref{eq:U_extraction}, which turns this coarse-grained state into a passive state
\begin{equation}
\pi_1=\begin{pmatrix}
\tfrac{1}{2} & 0 & 0 \\
0 & \tfrac{1}{2} & 0 \\
0 & 0 &  0
\end{pmatrix}.
\end{equation}
If the outcome is $i=0$ we obtain the projected state $\R_0=\pro{0}{0}$ after the measurement. The corresponding coarse-grained state is already passive and equal to the projected state,
\begin{equation}
\R_\cg^0=\pi_0=\R_0.
\end{equation}
As a result, no energy can be extracted in this case.

The total extractable work is given by Boltzmann ergotropy,
\begin{equation}
\begin{split}
W_\C^{B}(\R_D)&=\mathrm{tr}[\ham(\R_D-p_0\pi_0-p_1\pi_1)]\\
&=\tfrac{1}{8}E_0+\tfrac{7}{8}E_1-\tfrac{1}{8}E_0-\tfrac{7}{16}(E_0+E_1)\\
&=\tfrac{7}{16}(E_1-E_0).\\
\end{split}
\end{equation}

Compare this to the blind extraction protocol, assuming that the experimenter chooses unitary operator Eq.~\eqref{eq:U_extraction} to extract the energy directly, in the case when they receive outcome $i=1$ from the measurement. In the case they receive outcome $i=0$, they decide to do nothing, because they already know from the outcome that the state is passive. Thus, the final states are
\begin{equation}
\R_1^{\mathrm{final}}=U \pro{1}{1} U^\dag=\pro{1}{1},\quad \R_0^{\mathrm{final}}=\pro{0}{0}.
\end{equation}
The average extracted work is
\begin{equation}
W(\R_D)=\tr[\ham \R_D]-p_0\tr[\ham \R_0^{\mathrm{final}}]-p_1\tr[\ham \R_1^{\mathrm{final}}]=0.
\end{equation}
Thus, blind extraction does not extract anything in this case, while the coarse-grained extraction does.

\subsubsection{Observational ergotropy vs blind extraction}

Now consider the experimenter moved to stage 2., in which the probabilities $p_i$ are already identified, and no further measurements will be performed.

The coarse-grained state corresponding to the initial state $\R_D$ is given by
\begin{equation}
\R_\cg=\frac{1}{8}\P_0+\frac{7}{16}\P_1=\begin{pmatrix}
\tfrac{1}{8} & 0 & 0 \\
0 & \tfrac{7}{16} & 0 \\
0 & 0 & \tfrac{7}{16} 
\end{pmatrix}.
\end{equation}
The optimal extraction unitary, which is again of form~\eqref{eq:U_extraction}, will turn this into a passive state
\begin{equation}
\pi_\cg=\begin{pmatrix}
\tfrac{7}{16} & 0 & 0 \\
0 & \tfrac{7}{16} & 0 \\
0 & 0 &  \tfrac{1}{8}
\end{pmatrix}.
\end{equation}
The extracted work is given by observational ergotropy,
\begin{equation}
\begin{split}
   W_\C(\R_D)&=\tr[\ham(\R_D-\pi_\cg)]\\
   &=\tfrac{1}{8}E_0+\tfrac{7}{8}E_1-\tfrac{7}{16}E_0-\tfrac{7}{16}E_1-\tfrac{1}{8}E_2\\
   &= (7 E_1-5 E_0-2 E_2)/16.
\end{split}
\end{equation}
The extracted work is positive if $7 E_1> 5 E_0+2 E_2$. It is negative in the opposite case, signifying that the experimenter would lose energy. Of course, the computation crucially depends on the unknown initial energy, but as we mentioned in the main text, this can be estimated, either from Eq.~(17) 
in the main text in the case of local energy coarse-grainings, or using methods in~\cite{safranek2023expectation} in the case of general coarse-grainings.

We compare this to the blind extraction, in which the unitary~\eqref{eq:U_extraction} is applied directly on the state produced by the source. In this case, the final state is
\begin{equation}
\R_D^{\mathrm{final}}=U \R_D U^\dag=\begin{pmatrix}
0 & 0 & 0 \\
0 & \tfrac{7}{8} & 0 \\
0 & 0 & \tfrac{1}{8} 
\end{pmatrix},
\end{equation}
and the extracted work is given by
\begin{equation}
W(\R_D)=\tr[\ham (\R_D-\R_D^{\mathrm{final}})]=(E_0-E_2)/8.
\end{equation}
This is always negative, so the experimenter is bound to lose energy.

\section{Bound on energy for local energy coarse-grainings}\label{sec:5}

Here we prove the bound on energy that can be obtained when knowing the probabilities of outcomes of local energy measurements,
\begin{equation}
\abs{\tr[\ham \R]-\tr[\ham \R_\cg]}\leq 2\norm{H_{\mathrm{int}}}+k\Delta E,
\end{equation}
which is Eq.~(17) 
in the main text.
$\R_\cg=\sum_{E_1,E_2}\frac{p_{E_1E_2}}{V_{E_1}V_{E_1}}\P_{E_1}\otimes \P_{E_2}$ is the coarse-grained state given by the local energy coarse-grainings $\C=\{\P_{E_1}\otimes \P_{E_2}\}$. $\P_{E_i}=\sum_{E\in[E_i,E_i+\Delta E)}\pro{E_i}{E_i}$ are the coarse-grained projectors on local energies with resolution $\Delta E$, and 
$\ham=\ham_1+\ham_2+H_{\mathrm{int}}$, where spectral decompositions of the local Hamiltonians are $\ham_1=\sum_{E_1}E_1\pro{E_1}{E_1}$ and $\ham_2=\sum_{E_2}E_2\pro{E_2}{E_2}$. 

We have
\begin{widetext}
\begin{equation}
\begin{split}
\abs{\tr[\ham \R]-\tr[\ham \R_\cg]}&=\abs{\tr[(\ham_1+\ham_2)(\R-\R_\cg)]+\tr[H_{\mathrm{int}} (\R-\R_\cg)]}\\
&\leq \abs{\tr[\Big(\sum_{E_1,E_2}(E_1+E_2)\pro{E_1}{E_1}\otimes\pro{E_2}{E_2}\Big)(\R-\R_\cg)]}+\abs{\tr[H_{\mathrm{int}} (\R-\R_\cg)]}\\
&\leq \left|\sum_{E_1,E_2}(E_1+E_2)\bra{E_1,E_2}(\R-\R_\cg)\ket{E_1,E_2}\right|+2\norm{H_{\mathrm{int}}}\\
&\leq \left|\sum_{E_1',E_2'}\sum_{E_1\in [E_1',E_1'+\Delta E),E_2\in [E_2',E_2'+\Delta E)}(E_1+E_2)\bra{E_1,E_2}(\R-\R_\cg)\ket{E_1,E_2}\right|+2\norm{H_{\mathrm{int}}}\\
&\leq \bigg|\sum_{E_1',E_2'}(E_1'+E_2')\sum_{E_1\in [E_1',E_1'+\Delta E),E_2\in [E_2',E_2'+\Delta E)}\bra{E_1,E_2}(\R-\R_\cg)\ket{E_1,E_2}\\
&+\sum_{E_1',E_2'}\sum_{E_1\in [E_1',E_1'+\Delta E),E_2\in [E_2',E_2'+\Delta E)}(E_1+E_2-E_1'-E_2')\bra{E_1,E_2}\R\ket{E_1,E_2}\bigg|+2\norm{H_{\mathrm{int}}}\\
&\leq 0+\sum_{E_1,E_2}2\Delta E\, p_{E_1E_2}+2\norm{H_{\mathrm{int}}}
=2\Delta E+2\norm{H_{\mathrm{int}}}.
\end{split}
\end{equation}
\end{widetext}
In the above, $E_1,E_2$ denote fine-grained energies, and $E_1',E_2'$ coarse-grained energies. Generalization for $k$ partitions is $\abs{\tr[\ham \R]-\tr[\ham \R_\cg]}\leq k\Delta E+2\norm{H_{\mathrm{int}}}$. $\norm{H_{\mathrm{int}}}$ is also linearly proportional to $k$, therefore, both terms represent a finite size effect.

\end{document}